\documentclass[conference]{IEEEtran}

\usepackage[utf8]{inputenc}
\usepackage[T1]{fontenc}
\usepackage{amsmath,amssymb,amsthm}
\usepackage{mathtools}
\usepackage{graphicx}
\usepackage{tikz}
\usepackage[hidelinks,bookmarks=false]{hyperref}
\usepackage{xcolor}
\usepackage{cite}
\usepackage{booktabs}

\usetikzlibrary{arrows.meta,positioning,calc}

\newtheorem{lemma}{Lemma}
\newtheorem{corollary}{Corollary}
\newtheorem{proposition}{Proposition}
\theoremstyle{definition}
\newtheorem{definition}{Definition}
\newtheorem{remark}{Remark}
\newtheorem{example}{Example}

\newcommand{\Ftwo}{\mathbb{F}_2}
\newcommand{\cQ}{\mathcal{Q}}
\newcommand{\cS}{\mathcal{S}}
\newcommand{\cT}{\mathcal{T}}
\newcommand{\supp}{\operatorname{supp}}
\newcommand{\inner}[2]{\langle #1,#2\rangle}
\newcommand{\CCZ}{\mathsf{CCZ}}
\newcommand{\CZ}{\mathsf{CZ}}
\newcommand{\wt}{\operatorname{wt}}
\newcommand{\rowsp}{\operatorname{row}}

\begin{document}

\title{Native Non-Clifford Gates in Quantum LDPC Codes: Conditions, Synthesis, and Scaling Limits}

\author{%
\IEEEauthorblockN{Mohammad Rowshan\IEEEauthorrefmark{1}\IEEEauthorrefmark{2}}
\IEEEauthorblockA{\IEEEauthorrefmark{1}School of Electrical Engineering and Telecommunications, UNSW Sydney, NSW 2052, Australia}
\IEEEauthorblockA{\IEEEauthorrefmark{2}Centre for Quantum Software and Information, University of Technology Sydney, NSW 2007, Australia}
\IEEEauthorblockA{Email: mrowshan@ieee.org}
}

\maketitle

\begin{abstract}
Native constant-depth non-Clifford gates on quantum low-density parity-check (qLDPC) codes can substantially reduce the space-time overhead of magic-state distillation. This paper investigates the underlying parity and structural conditions governing parallel non-Clifford gate implementations. By analyzing their invariance, algebraic forms, and circuit synthesis, we establish fundamental limits on code scaling: strict saturated implementations cannot achieve constant-depth realization as code distance grows due to tight distance-depth bounds. We evaluate alternative scaling routes, showing that scalable designs must either relax strict subspace requirements or manage gate congestion. Through analytical bounds and code searches, we demonstrate that reconciling parallel non-Clifford operations with linear distance requires navigating these fundamental structural trade-offs.
\end{abstract}

\begin{IEEEkeywords}
quantum LDPC codes, CSS codes, non-Clifford gates, CCZ gate, magic state, fault-tolerant quantum computation, hypergraph coloring
\end{IEEEkeywords}

\section{Introduction}
\label{sec:intro}

Fault-tolerant quantum computation rests on error-correcting codes that
suppress logical failure below a target rate and on the threshold theorem
guaranteeing scalability once physical noise is small
enough~\cite{aharonov1997fault,aliferis2006quantum}.
Two primary costs dominate present fault-tolerant architectures. 
For surface and color codes~\cite{kitaev2003fault,bombin2006topological},
which exhibit an asymptotic rate of 
$k/n \to 0$, the physical qubit count scales as $d^{O(1)}$ 
(e.g., $\Theta(d^2)$) with the code distance $d$, whereas the logical 
failure rate decreases exponentially with $d$. Consequently, achieving 
a target logical error rate $\epsilon$ requires a physical qubit overhead 
that scales polylogarithmically, $O\big(\mathrm{polylog}(1/\epsilon)\big)$ 
(specifically, $O(\log^2(1/\epsilon))$) with the inverse target error, 
albeit accompanied by large prefactors.
Separately, preparing non-Clifford resource states by distillation
consumes a large share of the physical
budget~\cite{bravyi2005universal,bravyi2012magic,campbell2017quantum},
even under optimized factory and compilation
designs~\cite{gidney2019efficient,litinski2019game}.

\subsection{qLDPC codes and non-Clifford gates}

Quantum LDPC codes address the first cost: bounded row and column weights
in the parity-check matrices permit constant rate together with growing
distance, with every qubit participating in $O(1)$ stabilizer
measurements~\cite{breuckmann2021quantum}.  Hypergraph-product codes~\cite{tillich2014quantum} built from suitable
classical codes attain distance $\Theta(\sqrt n)$, with rate determined by
the constituent codes and constant rate attainable.  Quantum Tanner codes and the Panteleev--Kalachev
construction achieve constant rate and linear distance with check weights
and qubit degrees bounded independently of
$n$~\cite{leverrier2022quantum,panteleev2021asymptotically}.  Hardware-oriented work has followed~\cite{bravyi2024high,pecorari2025highrate,xu2025fast}.
Locality remains a hard constraint: for geometrically local
stabilizer codes, locality imposes strong rate--distance tradeoffs; in two
dimensions the Bravyi--Poulin--Terhal bound gives $k\,d^{2}=O(n)$, with
related higher-dimensional and connectivity tradeoffs established in
subsequent
work~\cite{bravyi2010tradeoffs,baspin2022quantifying,tremblay2022constant}.

Addressing the second cost requires codes carrying non-Clifford gates
natively.  No finite-dimensional quantum error-correcting code admits a
universal set of transversal logical
gates~\cite{eastin2009restrictions}, and locality-preserving gates in
$D$-dimensional topological codes are confined to level $D$ of the
Clifford hierarchy~\cite{bravyi2013classification,pastawski2015fault}.
Individual non-Clifford gates nonetheless survive on specific families:
three-dimensional colour codes admit transversal non-Clifford gates
including $T$, while three-dimensional surface-code constructions can
support transversal non-Clifford gates such as
$\CCZ$~\cite{kubica2015universal,vasmer2019three}, whereas
two-dimensional color codes support the transversal Clifford
group~\cite{bombin2006topological}.  Recent constructions fall into two
groups.  Asymptotically good ones, with constant rate and linear distance,
include the transversal-$\CCZ$ codes of Golowich
and Guruswami~\cite{golowich2024asymptotically}, Nguyen's good binary
codes~\cite{nguyen2025good}, and the addressable constructions of He \emph{et al.}, who introduced the
addressable-orthogonality framework governing which logical coordinates a
transversal gate touches, with parameters good up to polylogarithmic
factors~\cite{he2025quantum}, and subsequently obtained asymptotically good
qubit codes with addressable transversal
$\CCZ$~\cite{he2025asymptotically}.  A second group trades linear distance for other
structure: the algebraic product codes of Golowich and
Lin~\cite{golowich2024quantum} reach dimension $K\geq N^{1-\epsilon}$ and
distance $D\geq N^{1/r}/\operatorname{poly}(\log N)$ with stabilizer weight
$\operatorname{poly}(\log N)$, so they are LDPC only in the weaker
polylogarithmic sense rather than the $O(1)$ sense used here; Lin's sheaf
codes~\cite{lin2024transversal} and Zhu's homological product
constructions~\cite{zhu2025topological}, the latter with constant rate and
polynomial distance, sit in the same group.  Parallel execution has since been
addressed directly: Gu\'emard further studies addressable transversal
multi-controlled-$Z$ gates that can be executed in parallel on good qudit
codes~\cite{guemard2025good}, Nguyen constructs good
binary codes with a transversal $\CCZ$~\cite{nguyen2025good}, and Li
\emph{et al.} recently reported transversal multi-controlled-$Z$ gates on
almost-good qLDPC and quantum locally testable
codes~\cite{li2026transversal}.  No-go
results for logical gates on product codes have also been
developed~\cite{fu2025nogo}.  Parallel logical $\CCZ$ is therefore not an
unmet target; recent constructions already provide addressable or
parallelizable transversal $\CCZ$ and related multi-controlled-$Z$ gates,
although they employ structural mechanisms different from the
representative-level synthesis studied
here~\cite{he2025quantum,he2025asymptotically,guemard2025good,li2026transversal}.

\subsection{Scope and contributions of this paper}
A pattern recurs in constructions of logical $\CCZ$: a trilinear pairing
that evaluates nontrivially on the selected logical classes while unwanted
lower-order and spectator couplings vanish.  The realizations differ widely,
over qudits, through multiplication properties of classical codes, and
through cup products on cochain complexes, and none reduces literally to a
binary coordinatewise statement.  We isolate the elementary binary case,
pairwise orthogonality of $X$-type representatives together with odd triple
overlap, and ask how much of a logical gate it delivers by itself.

Admissibility turns out not to be needed for the circuit at all.
Section~\ref{sec:local} shows that every \emph{saturated} triple, admissible
or not, admits a seven-term phase synthesis realizing the cubic phase, with
per-qubit gate incidence at most $1+3p_0+3p_0^2$ when the shared regions
have size at most $p_0$.  Admissibility enters afterwards, forcing the
shared-region parities odd and giving the circuit the intended $\CCZ$-type
conjugation relations on $X^x,X^y,X^z$.  Both are also checked
computationally.  Uniqueness of the algebraic normal form then pins the
synthesis down completely: it is the only reduced ancilla-free
$Z$-$\CZ$-$\CCZ$ circuit for that phase, its region incidences are exact,
and with $\Delta$ the largest of them and $p$ the size of the largest shared
region they yield $L\ge\Delta\ge3p\ge d$.  Inside this model a saturated
admissible triple therefore has no constant-depth realization on a family of
growing distance, independently of any packing or scheduling argument;
improving on it means leaving the model or asking for equality only on the
code space.

Passing from representatives to logical classes is where the condition
stops being automatic.  Controlling one representative at a time gives
three Schur-product tests
(Lemma~\ref{lem:one-coord}), but changing all three at once produces cross
terms those tests do not touch, and they cannot be applied in succession.
Proposition~\ref{prop:stability} gives the exact criterion, which
additionally requires $x,y,z$ and $C_X$ itself to annihilate every Schur
product of pairs of $X$-stabilizers; the latter forces the trilinear
overlap form to vanish on all of $C_X^3$, a strengthened stabilizer-subspace
form of the triorthogonality of Bravyi and Haah~\cite{bravyi2012magic}.  Section~\ref{sec:hgp-example} exhibits a
saturated admissible triple failing all three parts on a finite member of a
qLDPC family, while Section~\ref{sec:positive} gives a small code satisfying
every condition, so the criterion is neither vacuous nor automatic.  A
companion bound (Proposition~\ref{prop:pos-bound}) shows any code meeting all
the conditions has distance $O(p)$, the positive-side echo of the depth
bound.  Representative admissibility is therefore strictly weaker than a
logical gate.

Two natural scaling routes for the saturated monomial synthesis are blocked,
by counting arguments in Section~\ref{sec:obstructions}.
Saturation with bounded shared regions forces every representative in the
triple to have weight at most $3p_0$, capping the distance at $3p_0$
(Proposition~\ref{prop:constant-distance}).  Linear-weight supports with
$O(1)$ qubit participation permit only $O(1)$ candidate triples in total
(Proposition~\ref{prop:incidence}).  The saturated constant-shared-region
synthesis therefore cannot yield a constant-depth construction on a family
with growing distance, and in the linear-distance regime bounded
participation precludes a growing collection of linear-size candidate
supports.  What survives is a scheduling statement: bounded incidence on
the rank-at-most-three interaction hypergraph, rather than disjointness of
logical supports, is the property yielding constant depth
(Proposition~\ref{prop:scheduling}).

Table~\ref{tab:notation} fixes notation, and a reproducibility script
accompanies the paper.

\begin{table}[t]
  \centering
  \caption{Notation.}
  \label{tab:notation}
  \small
  \begin{tabular}{@{}lp{6.1cm}@{}}
    \toprule
    Symbol & Meaning \\
    \midrule
    $n,k,d$ & physical qubits, logical qubits, distance \\
    $H_X,H_Z$ & $X$- and $Z$-stabilizer generator matrices, $H_XH_Z^\top=0$ \\
    $C_X,C_Z$ & $\rowsp(H_X)$, $\rowsp(H_Z)$ \\
    $\mathcal{L}_X,\mathcal{L}_Z$ & $C_Z^\perp/C_X$, $C_X^\perp/C_Z$ \\
    $u\odot v$ & Schur (bitwise AND) product \\
    $D$ & $\operatorname{span}\{u\odot v: u,v\in C_X\}\supseteq C_X$ \\
    $\tau(x,y,z)$ & triple overlap, Eq.~\eqref{eq:triple} \\
    $S_t$ & combined support $\supp(x_t)\cup\supp(y_t)\cup\supp(z_t)$ \\
    $P_x,\dots,P_{xyz}$ & the seven support regions of a triple \\
    $p_0$ & bound on the four shared-region sizes \\
    $\Delta$ & maximum qubit incidence in the interaction hypergraph \\
    $L,q$ & circuit depth, maximum gate arity \\
    \bottomrule
  \end{tabular}
\end{table}

\section{CSS Conventions and $\CCZ$-Admissible Triples}
\label{sec:css}

\subsection{Conventions}

Let $H_X\in\Ftwo^{m_X\times n}$ and $H_Z\in\Ftwo^{m_Z\times n}$ have rows
specifying the $X$- and $Z$-type stabilizer generators of a CSS code
$\cQ$, subject to $H_XH_Z^\top=0$.  Put $C_X=\rowsp(H_X)$ and
$C_Z=\rowsp(H_Z)$, so $C_X\subseteq C_Z^\perp$ and
\begin{gather}
  \mathcal{L}_X=C_Z^\perp/C_X,\qquad
  \mathcal{L}_Z=C_X^\perp/C_Z,\notag\\
  k=n-\operatorname{rank}H_X-\operatorname{rank}H_Z .
\end{gather}
The distance $d$ is the least Hamming weight in
$(C_Z^\perp\setminus C_X)\cup(C_X^\perp\setminus C_Z)$.  We call $\cQ$
qLDPC when the row and column weights of $H_X$ and $H_Z$ are bounded
independently of $n$.  This convention is used throughout.

\subsection{The triple overlap and its stability}

For $x,y,z\in\Ftwo^n$ set
\begin{equation}
  \tau(x,y,z)=\sum_{i=1}^n x_iy_iz_i \bmod 2 ,
  \label{eq:triple}
\end{equation}
the parity of the number of coordinates where all three vectors are $1$.
Its operational meaning is immediate.  On three $n$-qubit registers in the
computational basis state
$|x\rangle|y\rangle|z\rangle=\bigotimes_i|x_i\rangle|y_i\rangle|z_i\rangle$,
a transversal physical $\CCZ$, one gate on the $i$th qubit of each
register, multiplies the state by $(-1)^{\tau(x,y,z)}$, since each gate
contributes $(-1)^{x_iy_iz_i}$ and no amplitude is permuted.  The synthesis of Section~\ref{sec:local} works instead inside a single
$n$-qubit register.

The form $\tau$ is multilinear over $\Ftwo$, so it need not be constant on
cosets of $C_X$.  Varying one argument gives the following.

\begin{lemma}[One-coordinate variation]
  \label{lem:one-coord}
  For $x,y,z\in\Ftwo^n$ and $u\in C_X$,
  \begin{equation}
    \tau(x+u,y,z)+\tau(x,y,z)=\inner{u}{y\odot z},
    \label{eq:rep-indep}
  \end{equation}
  where $y\odot z$ is the indicator of $\supp(y)\cap\supp(z)$.  Hence
  $\tau(x,y,z)$ is unchanged by every replacement $x\mapsto x+u$ with
  $u\in C_X$, the arguments $y$ and $z$ held fixed, if and only if
  $y\odot z\in C_X^\perp$.  The corresponding statements for the second and
  third arguments require $x\odot z\in C_X^\perp$ and
  $x\odot y\in C_X^\perp$.
\end{lemma}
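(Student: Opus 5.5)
The identity \eqref{eq:rep-indep} follows directly from the multilinearity of $\tau$ over $\Ftwo$, so I will expand coordinatewise. For each $i$ we have $(x_i+u_i)y_iz_i = x_iy_iz_i + u_iy_iz_i$ in $\Ftwo$. Summing over $i$ and reducing mod $2$ gives
\begin{equation*}
  \tau(x+u,y,z)=\tau(x,y,z)+\sum_{i=1}^n u_i\,(y\odot z)_i ,
\end{equation*}
because $(y\odot z)_i=y_iz_i$ by definition of the Schur product. The last sum is exactly $\inner{u}{y\odot z}$. Adding $\tau(x,y,z)$ to both sides yields \eqref{eq:rep-indep}. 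The same computation also records that $y\odot z$ is the indicator vector of $\supp(y)\cap\supp(z)$. Note that $u\in C_X$ is not used here: the identity holds for every $u\in\Ftwo^n$.

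For the equivalence, I will fix $y,z$ and read invariance as the statement that $\tau(x+u,y,z)=\tau(x,y,z)$ for all $u\in C_X$. By \eqref{eq:rep-indep} this is the same as $\inner{u}{y\odot z}=0$ for all $u\in C_X$, which is precisely the definition of $y\odot z\in C_X^\perp$.

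The two directions need slightly different care. If $y\odot z\in C_X^\perp$, every term on the right vanishes, so invariance holds for every $x$. Conversely, if invariance holds, then \eqref{eq:rep-indep} shows that the right-hand side vanishes for each $u$. The right-hand side does not depend on $x$, so the conclusion holds regardless of which $x$ we started from. This independence from $x$ is the only subtlety, and it is also what makes the criterion a statement about $y,z$ alone.

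For the second and third arguments I will invoke the symmetry of $\tau$ under permutation of its arguments, which holds because $x_iy_iz_i$ is symmetric. Applying the first case to $\tau(y+u,x,z)$ and $\tau(z+u,x,y)$ gives the conditions $x\odot z\in C_X^\perp$ and $x\odot y\in C_X^\perp$, respectively. I do not expect any step to be a real obstacle: the proof is a direct computation, and the main care is in stating the quantifiers of the equivalence correctly.
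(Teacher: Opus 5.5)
Your proof is correct and matches the paper's argument: the coordinatewise expansion gives $\tau(x+u,y,z)+\tau(x,y,z)=\sum_i u_iy_iz_i=\inner{u}{y\odot z}$, and this vanishes for all $u\in C_X$ exactly when $y\odot z\in C_X^\perp$. Your appeal to the symmetry of $\tau$ for the second and third arguments spells out a step the paper leaves to the reader.
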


\begin{proof}
  Working mod $2$, $\tau(x+u,y,z)+\tau(x,y,z)=\sum_i u_iy_iz_i
  =\inner{u}{y\odot z}$, which vanishes for all $u\in C_X$ precisely when
  $y\odot z$ is orthogonal to $C_X$.
\end{proof}

These three tests are necessary but not sufficient for $\tau$ to be a
function of the three classes.  Replacing all three representatives at once
produces cross terms such as $\tau(u,v,z)=\inner{z}{u\odot v}$ that no
one-coordinate test controls, and the tests cannot be applied in
succession: after $y\mapsto y+v$, the condition needed to vary $x$ becomes
$(y+v)\odot z\in C_X^\perp$, not $y\odot z\in C_X^\perp$.  Write
\begin{equation}
  D=\operatorname{span}_{\Ftwo}\{u\odot v:\ u,v\in C_X\},
\end{equation}
and note $C_X\subseteq D$ because $u\odot u=u$, whence
$D^\perp\subseteq C_X^\perp$.

\begin{proposition}[Stability criterion for the triple-overlap form]
  \label{prop:stability}
  For $x,y,z\in\Ftwo^n$, the equality $\tau(x+u,y+v,z+w)=\tau(x,y,z)$ holds
  for all $u,v,w\in C_X$ if and only if
  \begin{enumerate}
    \item[(S1)] $x\odot y,\ x\odot z,\ y\odot z\in C_X^\perp$;
    \item[(S2)] $x,y,z\in D^\perp$;
    \item[(S3)] $C_X\subseteq D^\perp$.
  \end{enumerate}
    Condition (S3) says the trilinear overlap form vanishes identically on
  $C_X^3$.  Because $u\odot u=u$, taking repeated arguments in (S3) forces
  $\inner{u}{v}=0$ and $\wt(u)\equiv0$ for all $u,v\in C_X$, so $C_X$ is a
  totally triorthogonal subspace: self-orthogonal, even weight, and even on
  every triple including repetitions.  This is the stabilizer-subspace
  analogue of the triorthogonality conditions of Bravyi and
  Haah~\cite{bravyi2012magic}, which are imposed on the rows of a generator
  matrix and there permit odd-weight (logical) rows that (S3) excludes.
\end{proposition}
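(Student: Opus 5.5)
The plan is to expand $\tau(x+u,y+v,z+w)$ by trilinearity into eight terms and show that the seven correction terms vanish for all $u,v,w\in C_X$ exactly when (S1)--(S3) hold:
\begin{align*}
\tau(x+u,y+v,z+w)-\tau(x,y,z)
&=\tau(u,y,z)+\tau(x,v,z)+\tau(x,y,w)\\
&\quad+\tau(u,v,z)+\tau(u,y,w)+\tau(x,v,w)\\
&\quad+\tau(u,v,w).
\end{align*}
Each term is rewritten as an inner product using $\tau(a,b,c)=\inner{a}{b\odot c}$ and its symmetric variants. The single-variable terms become $\inner{u}{y\odot z}$, $\inner{v}{x\odot z}$ and $\inner{w}{x\odot y}$. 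The pairwise terms become $\inner{z}{u\odot v}$, $\inner{y}{u\odot w}$ and $\inner{x}{v\odot w}$. The cubic term is $\inner{w}{u\odot v}$.

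\emph{Sufficiency.} Each of the seven terms vanishes separately under the conditions.
\begin{itemize}
\item[] (S1) kills the three linear terms, since $u,v,w\in C_X$ are orthogonal to the pairwise Schur products of $x,y,z$.
\item[] (S2) kills the pairwise terms, because $u\odot v$, $u\odot w$ and $v\odot w$ lie in $D$, while $x,y,z\in D^\perp$.
\item[] (S3) kills $\inner{w}{u\odot v}$, since $w\in C_X\subseteq D^\perp$ and $u\odot v\in D$.
\end{itemize}
The sum is therefore zero.

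\emph{Necessity.} Here the point is to isolate each family of terms by specialising the free variables.
\begin{itemize}
\item[] Setting $v=w=0$ gives invariance under $x\mapsto x+u$ alone, so Lemma~\ref{lem:one-coord} yields $y\odot z\in C_X^\perp$. The symmetric specialisations give the other two conditions of (S1).
\item[] With (S1) in hand, set $w=0$. Then only $\tau(u,v,z)=\inner{z}{u\odot v}$ survives, so $z\perp u\odot v$ for all $u,v\in C_X$. Since $D$ is spanned by such products, $z\in D^\perp$. The same argument with $u=0$ and with $v=0$ gives $x\in D^\perp$ and $y\in D^\perp$, which is (S2).
\item[] Finally, with (S1) and (S2) in hand, the only surviving term is $\inner{w}{u\odot v}$. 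Its vanishing for all $u,v,w\in C_X$ means every $w\in C_X$ is orthogonal to a spanning set of $D$, which is (S3).
\end{itemize}

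The remark following the statement should then come out directly. Take $u=v$ in (S3): since $u\odot u=u$, this gives $\inner{w}{u}=0$, so $C_X$ is self-orthogonal. Taking $u=v=w$ gives $\wt(u)\equiv 0$. The general case of (S3) is trilinear vanishing on $C_X^3$. I would note here that (S3) forces $C_X\subseteq C_X^\perp$. This is consistent with the inclusion $D^\perp\subseteq C_X^\perp$ recorded before the statement.

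I expect no step to be a genuine obstacle. The only place needing care is the ordering in the necessity argument. Each specialisation must be made only after the earlier conditions have removed the lower-order terms; otherwise the individual terms are not separated. A further point is that passing from orthogonality to the generators $u\odot v$ to orthogonality to all of $D$ needs only linearity of the inner product.
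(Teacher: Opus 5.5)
Your proposal is correct and follows the paper's proof essentially step for step. It uses the same seven-term trilinear expansion, kills the linear, pairwise and cubic groups with (S1), (S2) and (S3) for sufficiency, and obtains necessity by the same sequential specialisations: $v=w=0$, then $w=0$ and its cyclic variants once (S1) is known, then the lone term $\tau(u,v,w)$ once (S1) and (S2) are known.
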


\begin{proof}
  Multilinearity gives
  \begin{align}
    \tau(x+u,&\,y+v,z+w)+\tau(x,y,z)\notag\\
      ={}&\tau(u,y,z)+\tau(x,v,z)+\tau(x,y,w)\notag\\
       &+\tau(u,v,z)+\tau(u,y,w)+\tau(x,v,w)+\tau(u,v,w).
    \label{eq:seven-cross}
  \end{align}
  For sufficiency, note that the first three terms are $\inner{u}{y\odot
  z}$, $\inner{v}{x\odot z}$ and $\inner{w}{x\odot y}$, which vanish by
  (S1); the next three are $\inner{z}{u\odot v}$, $\inner{y}{u\odot w}$ and
  $\inner{x}{v\odot w}$, which vanish by (S2); the last is
  $\inner{w}{u\odot v}$, which vanishes by (S3).    For necessity, argue sequentially.  Setting $v=w=0$ leaves only
  $\tau(u,y,z)=\inner{u}{y\odot z}$, which must vanish for all $u\in C_X$;
  this and its two cyclic variants give (S1).  Now set $w=0$: the two terms
  linear in $u$ and $v$ vanish by (S1), leaving
  $\tau(u,v,z)=\inner{z}{u\odot v}$, which must vanish for all $u,v\in C_X$,
  hence $z\in D^\perp$, and cyclically, giving (S2).  Finally, with (S1) and
  (S2) imposed, \eqref{eq:seven-cross} reduces to $\tau(u,v,w)$, which must
  vanish for all $u,v,w\in C_X$; that is (S3).
\end{proof}

Pairwise orthogonality is subject to the same issue, since
$\inner{x+u}{y+v}=\inner{x}{y}+\inner{u}{y}+\inner{x}{v}+\inner{u}{v}$, and
$y\in C_Z^\perp$ does not imply $y\in C_X^\perp$.  It is stable exactly when
$x,y,z\in C_X^\perp$ and $C_X$ is self-orthogonal.  Both follow from (S2)
and (S3) through $D^\perp\subseteq C_X^\perp$, so
Proposition~\ref{prop:stability} already covers every clause of
Definition~\ref{def:admissible}; linear independence of the classes does
not depend on the representatives at all.

\begin{definition}[$\CCZ$-admissible triple]
  \label{def:admissible}
  Let $x,y,z\in C_Z^\perp$ be fixed vectors whose classes in
  $\mathcal{L}_X$ are linearly independent.  The triple $(x,y,z)$ is
  \emph{$\CCZ$-admissible} if
  \begin{equation}
    \inner{x}{y}=\inner{x}{z}=\inner{y}{z}=0
    \quad\text{and}\quad
    \tau(x,y,z)=1 .
  \end{equation}
  It is \emph{stably admissible} when (S1)--(S3) also hold, in which case
  every triple of representatives of the same three classes is admissible.
\end{definition}

Stability makes the parity conditions well posed on classes; it says nothing
about how a circuit built from them acts on encoded data.

A four-qubit instance shows concretely that (S1) is insufficient.  Take
$C_Z=\{0\}$ and $C_X=\operatorname{span}\{g\}$ with $g=(1,1,1,1)$, and let
\begin{equation}
  x=(0,1,1,1),\quad y=(1,0,1,1),\quad z=(1,1,0,1)
  \label{eq:s1-counterexample}
\end{equation}
as in Example~\ref{ex:4qubit}.  The three classes are independent modulo
$C_X$ and the triple is admissible.  All three Schur products are
orthogonal to $g$, since $x\odot y=(0,0,1,1)$, $x\odot z=(0,1,0,1)$ and
$y\odot z=(1,0,0,1)$ each have even weight, so (S1) holds; here $D=\{0,g\}$
and $\wt(g)=4$ is even, so (S3) holds as well.  But
$\inner{x}{g}=\inner{y}{g}=\inner{z}{g}=1$, so (S2) fails, and indeed
\begin{equation}
  \tau(x,y+g,z+g)=0\neq1=\tau(x,y,z).
\end{equation}
Changing two representatives at once therefore flips the triple overlap
even though every one-coordinate test passes.

Conditions (S1)--(S3) amount to $H_X(x\odot y)=0$ and its analogues
together with two rank computations involving $D$, all polynomial in $n$.
They are vacuous when $C_X=\{0\}$ and restrictive otherwise.  As an
implementation check, the criterion was compared against brute-force
enumeration over all $(u,v,w)\in C_X^3$ on $700$ small random instances,
agreeing on every one.

\begin{remark}[Why pairwise orthogonality]
  \label{rem:z-corrections}
    If, say, $\inner{x}{y}=1$, a diagonal layer realizing the cubic phase
  also imprints an additional representative-level quadratic phase coupling
  the first two arguments.  That phase corresponds to a $\CZ$-type factor
  only once a logical action has been identified, which is the bookkeeping
  familiar from triorthogonal and divisible-code
  constructions~\cite{bravyi2012magic}.  Requiring pairwise
  orthogonality removes those terms from the conjugation relation used in
  Proposition~\ref{prop:conjugation}.
\end{remark}

\begin{remark}[Relation to triorthogonality]
  A binary matrix is triorthogonal~\cite{bravyi2012magic} when every pair
  of distinct rows and every triple of distinct rows have even overlap.
  Odd-weight rows are permitted and serve as logical rows.  Applied to a
  whole stabilizer subspace, (S3) is stronger: it also requires vanishing
  with repeated arguments, which forces every element of $C_X$ to have
  even weight.  Definition~\ref{def:admissible} asks the opposite of its
  chosen triple, $\tau=1$, and the two requirements constrain different
  objects, (S3) the space $C_X$ and admissibility one triple of coset
  representatives.
\end{remark}

\section{Circuit Synthesis for Saturated Triples}
\label{sec:local}

\subsection{Region decomposition}

Partition $S_t=\supp(x)\cup\supp(y)\cup\supp(z)$ into seven disjoint regions
by membership: the \emph{private} $P_x,P_y,P_z$, the \emph{pairwise-shared}
$P_{xy},P_{xz},P_{yz}$, and the \emph{triple-shared} $P_{xyz}$, with
$s_{xy}=|P_{xy}|$ and so on.

\begin{lemma}[Forced parities]
  \label{lem:parities}
  If $(x,y,z)$ is $\CCZ$-admissible then $s_{xy},s_{xz},s_{yz}$ and
  $s_{xyz}$ are all odd.
\end{lemma}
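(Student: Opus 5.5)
The argument is pure parity bookkeeping on the region decomposition of $S_t$, using only the four parity conditions in Definition~\ref{def:admissible}. We never need stability (S1)--(S3) or membership in $C_Z^\perp$, and saturation plays no role. We first express each pairwise inner product and the triple overlap as a sum of region sizes, and then reduce modulo $2$.

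\emph{Step 1 (triple region).} By definition, $P_{xyz}$ is exactly the set of coordinates $i$ with $x_i=y_i=z_i=1$. Hence
\begin{equation}
  \tau(x,y,z)=\sum_i x_iy_iz_i \equiv |P_{xyz}| = s_{xyz} \pmod 2 .
\end{equation}
Admissibility gives $\tau(x,y,z)=1$, so $s_{xyz}$ is odd.

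\emph{Step 2 (pairwise regions).} We have $\supp(x)\cap\supp(y)=P_{xy}\sqcup P_{xyz}$. The union is disjoint because the seven regions partition $S_t$ by membership pattern, and a coordinate in both supports is either outside $\supp(z)$, placing it in $P_{xy}$, or inside it, placing it in $P_{xyz}$. Therefore
\begin{equation}
  \inner{x}{y}\equiv s_{xy}+s_{xyz}\pmod 2 .
\end{equation}
Admissibility gives $\inner{x}{y}=0$, so $s_{xy}\equiv s_{xyz}\equiv 1$. The same argument with $\inner{x}{z}=0$ and $\inner{y}{z}=0$ gives $s_{xz}$ and $s_{yz}$ odd.

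The only point needing care is the disjoint decomposition of each pairwise intersection into one pairwise-shared region plus the triple-shared region. That is immediate from the membership partition, so we expect no real obstacle. We will also remark in the write-up that the conclusion holds for every admissible triple, saturated or not. Since the linear independence of the classes is never used, it holds even for any triple of vectors that is pairwise orthogonal with odd triple overlap.
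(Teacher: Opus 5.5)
Your argument is correct and is essentially the paper's proof: both read off $\tau(x,y,z)\equiv s_{xyz}$ and $\inner{x}{y}\equiv s_{xy}+s_{xyz}$ (and its cyclic analogues) from the region partition, then reduce mod $2$. Your remark that the lemma uses neither saturation nor linear independence of the classes is also accurate.
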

\begin{proof}
  $\inner{x}{y}\equiv s_{xy}+s_{xyz}$, so $\inner{x}{y}=0$ gives
  $s_{xy}\equiv s_{xyz}$.  Also $\tau(x,y,z)\equiv s_{xyz}$, which is $1$,
  so $s_{xyz}$ is odd and hence so is $s_{xy}$; the other two follow
  identically.
\end{proof}

Call $(x,y,z)$ \emph{saturated} when $P_x=P_y=P_z=\emptyset$, and say it
has \emph{shared-region bound} $p_0$ when
$s_{xy},s_{xz},s_{yz},s_{xyz}\le p_0$.  Lemma~\ref{lem:parities} shows
that $p_0$ may as well be taken odd, since all four sizes are odd.  We
avoid the word \emph{overlap} for this bound: it constrains the four
shared regions only, and says nothing about the private ones.

\subsection{The seven-term identity}

\begin{lemma}[Diagonal synthesis]
  \label{lem:local-realization}
    Let $(x,y,z)$ be any saturated triple with shared-region bound $p_0$;
  admissibility is not needed here.  Set $\ell_x(c)=\inner{x}{c}$ and
  likewise for $y,z$.
  There is a diagonal circuit $U_t$ composed of $\CCZ$, $\CZ$ and $Z$
  gates, every leg inside $S_t$, with
  \begin{equation}
    U_t|c\rangle=(-1)^{\ell_x(c)\ell_y(c)\ell_z(c)}\,|c\rangle
    \quad\text{for all }c\in\Ftwo^n .
    \label{eq:wirewise-phase}
  \end{equation}
  In this synthesis every qubit of $S_t$ is incident to at most
  $1+3p_0+3p_0^2$ gates, and a qubit of $P_{xyz}$ attains that value when
  all four shared regions have size exactly $p_0$.
\end{lemma}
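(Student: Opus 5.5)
The plan is to expand the cubic phase into monomials over $\Ftwo$ and read each monomial as a gate. Saturation gives, for every $c$,
\[
\ell_x(c)=\sum_{i\in P_{xy}\cup P_{xz}\cup P_{xyz}} c_i ,
\]
with the analogous expressions for $\ell_y$ and $\ell_z$. Each linear form is therefore a sum of the region parities $a_{xy},a_{xz},a_{xyz}$ (for $x$), $a_{xy},a_{yz},a_{xyz}$ (for $y$) and $a_{xz},a_{yz},a_{xyz}$ (for $z$), where $a_R=\sum_{i\in R}c_i$.

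Set $\alpha=a_{xy}$, $\beta=a_{xz}$, $\gamma=a_{yz}$ and $\delta=a_{xyz}$. The phase exponent is then
\[
(\alpha+\beta+\delta)(\alpha+\gamma+\delta)(\beta+\gamma+\delta).
\]
I will multiply this out in $\Ftwo[\alpha,\beta,\gamma,\delta]$ using the Boolean reduction $t^2=t$ on the region parities. This reduction is legitimate because each $a_R$ takes values in $\{0,1\}$ and the phase depends only on the value mod $2$. Carried out carefully, the product collapses to seven terms:
\[
\delta+\alpha\beta+\alpha\gamma+\beta\gamma+\alpha\delta+\beta\delta+\gamma\delta .
\]
I will verify this by evaluating both sides on all $16$ inputs, or symbolically. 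This seven-term identity is the step to get right first, since the gate list and the incidence count both depend on it.

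Next I expand each term in single qubits. The term $\delta=\sum_{i\in P_{xyz}}c_i$ becomes one $Z$ on each qubit of $P_{xyz}$. A product of two region parities, for example $\alpha\beta=\sum_{i\in P_{xy},\,j\in P_{xz}}c_ic_j$, becomes one $\CZ$ on each pair $(i,j)\in P_{xy}\times P_{xz}$. The regions are disjoint, so $i\neq j$ and these are genuine two-qubit $\CZ$ gates. Taking $U_t$ to be the product of all these commuting diagonal gates gives \eqref{eq:wirewise-phase}, and every leg lies in $S_t$. As it happens no $\CCZ$ is needed. The statement only allows $\CCZ$ gates, so this is consistent. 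If my reduction turns out to leave a cubic term such as $\alpha\beta\gamma$, it would instead contribute $\CCZ$ gates on $P_{xy}\times P_{xz}\times P_{yz}$, and the counting below would be adjusted accordingly. Admissibility is never used in this part.

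For the incidence bound I count per region. A qubit of $P_{xyz}$ is acted on by its own $Z$, by $\CZ$ gates to $P_{xy}$, $P_{xz}$ and $P_{yz}$ (at most $3p_0$ of them), and by any cubic-type gates involving it. A qubit of a pairwise region such as $P_{xy}$ meets $\CZ$ gates to the other two pairwise regions and to $P_{xyz}$, which is at most $3p_0$. The target is $1+3p_0+3p_0^2$, and the $3p_0^2$ part must come from the three products $\delta\alpha\beta$, $\delta\alpha\gamma$, $\delta\beta\gamma$ realized as $\CCZ$ gates on a qubit of $P_{xyz}$. I therefore expect the intended synthesis to keep the unreduced product of the three linear forms, without exploiting $t^2=t$ to cancel the cubic terms.

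The main obstacle is reconciling my algebra with the stated count. I will redo the expansion keeping the $\delta$-containing terms $\delta\alpha\beta,\delta\alpha\gamma,\delta\beta\gamma$ as $\CCZ$ gates, and check which set of seven monomials the paper intends. Whichever reduced form is correct, the argument has the same three parts: expand by regions, map monomials to gates, and bound degrees region by region. Equality in the bound follows by plugging in $|R|=p_0$ for every shared region and checking that a $P_{xyz}$ qubit meets exactly that many gates.
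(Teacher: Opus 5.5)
Your overall plan is the paper's: expand $\ell_x\ell_y\ell_z$ in region parities, map monomials to gates, and count incidences region by region. But the identity everything rests on is false, so the proposal as written does not prove the lemma. At $\delta=0$ the three factors $\alpha+\beta$, $\alpha+\gamma$, $\beta+\gamma$ sum to $0$ over $\Ftwo$. They therefore cannot all equal $1$, and the product vanishes identically. Your right-hand side instead reduces to $\alpha\beta+\alpha\gamma+\beta\gamma$, which is $1$ at $(\alpha,\beta,\gamma,\delta)=(1,1,0,0)$.

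Concretely, take $x=(0,1,1,1)$, $y=(1,0,1,1)$, $z=(1,1,0,1)$ and $c=(0,1,1,0)$. Then $\ell_x(c)=0$, yet your circuit imprints $-1$ through the $\CZ$ on qubits $2,3$. The conclusion that no $\CCZ$ is needed cannot be rescued by any other choice of gates. The algebraic normal form of a Boolean function is unique, and $\ell_x\ell_y\ell_z$ has degree three in the $c_i$ as soon as $P_{xyz}$ and two pairwise regions are nonempty. So no circuit of $Z$ and $\CZ$ gates alone realizes it.

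Your fallback guess, that the intended synthesis avoids the reduction $t^2=t$, misdiagnoses the problem. The reduction is correct and is exactly what produces seven terms; the error is in the bookkeeping. First, $(\alpha+\beta+\delta)(\alpha+\gamma+\delta)=\alpha+\delta+\alpha\beta+\alpha\gamma+\beta\gamma+\beta\delta+\gamma\delta$, since the two copies of $\alpha\delta$ cancel. Multiplying by $\beta+\gamma+\delta$, the $\beta$- and $\gamma$-parts both equal $\alpha\beta\gamma+\beta\gamma+\beta\gamma\delta$ and cancel, leaving
\[
\ell_x\ell_y\ell_z=\alpha\beta\delta+\alpha\gamma\delta+\beta\gamma\delta+\alpha\delta+\beta\delta+\gamma\delta+\delta .
\]
Every surviving term contains $\delta$; the terms that drop out are $\alpha\beta$, $\alpha\gamma$, $\beta\gamma$ and $\alpha\beta\gamma$.

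With this identity your remaining steps go through, but the counts differ from yours. A qubit of $P_{xyz}$ lies in all seven groups and meets at most $3p_0^2+3p_0+1$ gates, with equality when all four regions have size $p_0$. A qubit of $P_{xy}$ lies only in the three groups containing $\alpha$ and meets at most $2p_0^2+p_0$ gates, not $3p_0$.
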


\begin{proof}
  Let $a=\sum_{i\in P_{xy}}c_i$, $b=\sum_{i\in P_{xz}}c_i$,
  $c'=\sum_{i\in P_{yz}}c_i$, $e=\sum_{i\in P_{xyz}}c_i$, all mod $2$.
  Saturation gives $\ell_x=a+b+e$, $\ell_y=a+c'+e$, $\ell_z=b+c'+e$.
  Expanding over $\Ftwo$ and reducing squares
  (Appendix~\ref{app:seven-term}),
  \begin{equation}
    \ell_x\ell_y\ell_z
      = abe+ac'e+bc'e+ae+be+c'e+e .
    \label{eq:seven-terms}
  \end{equation}
  Substituting $a=\sum_{P_{xy}}c_i$ and so on expands each term into
  monomials in the $c_i$ of degree equal to the number of region symbols
  present.  Realize each cubic monomial by one $\CCZ$, each quadratic
  monomial by one $\CZ$, and the linear term $e$ by one $Z$ on each qubit
  of $P_{xyz}$.  Saturation places every index inside $S_t$, so
  \eqref{eq:wirewise-phase} holds by construction.  For the incidence
  count, a qubit of $P_{xyz}$ occurs in all seven terms: three cubic terms
  pair it with two shared regions each, at most $p_0^2$ monomials apiece;
  three quadratic terms pair it with one region each, at most $p_0$
  apiece; the linear term contributes one.  A qubit of $P_{xy}$ occurs
  only in the terms containing $a$, giving at most $2p_0^2+p_0$.  Both are
  bounded by $1+3p_0+3p_0^2$, with equality at a qubit of $P_{xyz}$
  whenever all four shared regions have size exactly $p_0$.  For an
  \emph{admissible} triple Lemma~\ref{lem:parities} restricts such
  configurations to odd $p_0$; without admissibility they occur at even $p_0$
  as well, as the script exhibits.
\end{proof}

Within the circuit model of \eqref{eq:wirewise-phase} the count is not
merely an upper bound from one synthesis: it is forced.

\begin{lemma}[Uniqueness of the reduced ancilla-free diagonal gate multiset]
  \label{lem:anf}
  Let $\mathcal{V}$ be the set of ancilla-free circuits on $n$ qubits built
  from gates $Z_i$, $\CZ_{ij}$ and $\CCZ_{ijk}$.  Every $V\in\mathcal{V}$ is
  diagonal with $V|c\rangle=(-1)^{P_V(c)}|c\rangle$, where
  \begin{equation}
    P_V(c)=\sum_i\alpha_ic_i+\sum_{i<j}\beta_{ij}c_ic_j
           +\sum_{i<j<k}\gamma_{ijk}c_ic_jc_k
    \label{eq:anf}
  \end{equation}
  and $\alpha,\beta,\gamma$ are the parities of the multiplicities of the
  corresponding gates in $V$.  Conversely, let $Q:\Ftwo^n\to\Ftwo$ have
  degree at most three with $Q(0)=0$.  Then some $V\in\mathcal{V}$ realizes
  $Q$ on every $c\in\Ftwo^n$, and after cancelling repeated gates its gate
  multiset is unique: it contains $Z_i$ exactly when $\alpha_i=1$,
  $\CZ_{ij}$ exactly when $\beta_{ij}=1$ and $\CCZ_{ijk}$ exactly when
  $\gamma_{ijk}=1$, for the coefficients of the algebraic normal form of
  $Q$.  The gates commute, so the ordering and layering of that multiset
  remain free.
\end{lemma}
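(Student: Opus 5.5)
The plan is to prove the forward direction gate by gate and then obtain the converse from uniqueness of the algebraic normal form (ANF) of Boolean functions. First, each generator acts diagonally on computational basis states: $Z_i|c\rangle=(-1)^{c_i}|c\rangle$, $\CZ_{ij}|c\rangle=(-1)^{c_ic_j}|c\rangle$ and $\CCZ_{ijk}|c\rangle=(-1)^{c_ic_jc_k}|c\rangle$. A product of diagonal operators is diagonal and its phase exponents add mod $2$. Inducting on the number of gates in $V$, I will show that $P_V(c)$ is the mod-$2$ sum of the monomials of the individual gates. Collecting equal monomials, the coefficient of $c_i$, $c_ic_j$ or $c_ic_jc_k$ is the multiplicity of the corresponding gate mod $2$. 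Since $c_i^2=c_i$ on $\Ftwo$ and the gates act on distinct qubits, every gate contributes a squarefree monomial, so \eqref{eq:anf} is already in reduced form. Diagonal operators commute, so the phase depends only on the gate multiset and not on its order; this also gives the final sentence of the statement.

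For the converse, I will write $Q$ in its ANF, that is, as a mod-$2$ sum of squarefree monomials. The ANF exists and is unique for every Boolean function: the $2^n$ squarefree monomials span the $2^n$-dimensional $\Ftwo$-space of functions $\Ftwo^n\to\Ftwo$. Independence follows from Möbius inversion, where the coefficient of $\prod_{i\in T}c_i$ equals $\sum_{S\subseteq T}Q(\mathbf 1_S)$. The hypothesis $Q(0)=0$ kills the constant term, and degree at most three removes all monomials of degree four or more. Taking $V$ to be the product of one gate per monomial with coefficient $1$ then realizes $Q$ by the first part.

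For uniqueness, suppose $V'\in\mathcal V$ also realizes $Q$ on all $c$. By the first part, $P_{V'}$ is a polynomial of the form \eqref{eq:anf} that agrees with $Q$ as a function. ANF uniqueness then forces its coefficients to equal those of $Q$. Hence after cancelling pairs of identical gates, which leaves each multiplicity equal to its parity, the remaining multiset is exactly the one described.

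I do not expect a real obstacle here. The only point needing care is the global phase. Exact equality of operators, as opposed to equality up to a phase, is what lets me conclude $P_{V'}=Q$ pointwise: evaluating at $c=0$ shows the phase is $1$, since every gate acts trivially on $|0\rangle$. I will state the realization requirement as pointwise equality of phases, so that no global-phase ambiguity enters.
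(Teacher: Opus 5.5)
Your proposal is correct and takes essentially the paper's route. Additivity of the phase exponents, gate by gate, gives \eqref{eq:anf} with parity coefficients, and existence and uniqueness of the algebraic normal form then give both the realization and the uniqueness of the reduced multiset. Your M\"obius-inversion justification of ANF uniqueness and your global-phase remark are harmless details that the paper leaves implicit.
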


\begin{proof}
  Each listed gate is diagonal, Hermitian and involutory, and any two of them
  commute; a gate on leg set $T$ contributes the phase $\prod_{i\in T}c_i$.
  Hence $P_V=\sum_{\text{gates}}\prod_{i\in T}c_i$ over $\Ftwo$, which is
  \eqref{eq:anf} with the stated coefficients, and repeated gates cancel in
  pairs.  The reachable functions are exactly those of degree at most three
  with no constant term, since no gate contributes a constant.  The algebraic
  normal form of a Boolean function is unique, so $P_V=Q$ determines
  $\alpha,\beta,\gamma$ and therefore the reduced gate multiset.
\end{proof}

The uniqueness has a consequence that needs neither equal region sizes nor
any packing argument.

\begin{corollary}[A distance lower bound on exact-synthesis depth]
  \label{cor:exact}
  Let $(x,y,z)$ be a saturated $\CCZ$-admissible triple, write
  $a=s_{xy}$, $b=s_{xz}$, $c=s_{yz}$, $e=s_{xyz}$, and put
  $p=\max\{a,b,c,e\}$.  In the reduced gate multiset of
  Lemma~\ref{lem:anf} a qubit of each region is incident to exactly
  \begin{equation}
    \begin{aligned}
      \delta_{xy}&=e(b+c+1), &\delta_{xz}&=e(a+c+1),\\
      \delta_{yz}&=e(a+b+1), &\delta_{xyz}&=ab+ac+bc\\
      &&&\quad{}+a+b+c+1
    \end{aligned}
    \label{eq:incidences}
  \end{equation}
  gates.  Its maximum incidence $\Delta$ and any depth $L$ realizing it
  therefore satisfy
  \begin{equation}
    L\;\ge\;\Delta\;\ge\;3p\;\ge\;d .
    \label{eq:depth-chain}
  \end{equation}
  Consequently no such realization has constant depth on a family with
  $d_n\to\infty$.
\end{corollary}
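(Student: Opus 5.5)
The plan is to compute the algebraic normal form of $Q(c)=\ell_x(c)\ell_y(c)\ell_z(c)$ explicitly and then read the gate multiset off it using the uniqueness in Lemma~\ref{lem:anf}.

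\textbf{ANF and gate multiset.} Start from the seven-term identity \eqref{eq:seven-terms} of Lemma~\ref{lem:local-realization}, with $a,b,c',e$ the region parities. Substituting $a=\sum_{i\in P_{xy}}c_i$ and so on, each term expands into a sum of monomials in the $c_i$.

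The key observation is that no cancellation occurs, for two reasons:
\begin{itemize}
\item The four regions are pairwise disjoint. So a monomial arising from a given term (e.g.\ $abe$) is a product $c_ic_jc_k$ with $i\in P_{xy}$, $j\in P_{xz}$, $k\in P_{yz}\cup P_{xyz}$ placed appropriately. Its index set meets exactly the regions named in that term, one index per region. Different terms therefore produce monomials with different ``region signatures''.
\item Within a single term, distinct choices of indices give distinct monomials, since the regions are disjoint.
\end{itemize}
Hence every monomial appears with coefficient exactly $1$, and the ANF of $Q$ is the multiset union of these monomials. By Lemma~\ref{lem:anf}, the reduced ancilla-free gate multiset is exactly this list, with no freedom left.

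\textbf{Incidence counts.} Next I count incidences.
\begin{itemize}
\item A qubit of $P_{xy}$ lies in exactly the terms containing $a$: $abe$, $ac'e$ and $ae$. These contribute $be$, $ce$ and $e$ monomials respectively, giving $\delta_{xy}=e(b+c+1)$. The counts $\delta_{xz}$ and $\delta_{yz}$ follow symmetrically.
\item A qubit of $P_{xyz}$ lies in all seven terms, contributing $ab+ac+bc+a+b+c+1$.
\end{itemize}
This establishes \eqref{eq:incidences}.

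\textbf{The chain \eqref{eq:depth-chain}.} Admissibility and Lemma~\ref{lem:parities} make $a,b,c,e$ all odd, hence each is at least $1$. The middle inequality $\Delta\ge 3p$ follows by cases on which quantity attains $p$:
\begin{itemize}
\item If $p=e$, then $\Delta\ge\delta_{xy}=e(b+c+1)\ge 3e$.
\item If $p=a$, then $\Delta\ge\delta_{xyz}\ge ab+ac+a\ge 3a$. The cases $p=b$ and $p=c$ are symmetric.
\end{itemize}

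For $L\ge\Delta$: in any layered circuit each layer uses a given qubit at most once. The reduced multiset is forced by Lemma~\ref{lem:anf}, so any realization contains at least these gates (extra gates only cancel in pairs). The most incident qubit therefore needs at least $\Delta$ layers.

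For $3p\ge d$: saturation gives $\wt(x)=a+b+e\le 3p$. The class of $x$ is nonzero in $\mathcal{L}_X$, since the three classes are linearly independent, so $x\in C_Z^\perp\setminus C_X$ and hence $\wt(x)\ge d$.

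\textbf{Consequence.} If $d_n\to\infty$, then $L\ge d_n$ is unbounded, so no realization has constant depth.

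\textbf{Main obstacle.} The delicate step is the no-cancellation claim. It must be argued carefully that the region-signature map separates the monomials, including the linear $Z$ terms from $e$ and the quadratic terms from $ae$, $be$, $c'e$. It must also be justified that ``any depth realizing it'' refers to circuits whose reduction is this multiset, so that uniqueness applies.
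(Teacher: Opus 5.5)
Your proposal is correct and follows essentially the same route as the paper: disjointness of the seven monomial groups fixes the gate multiset via the uniqueness in Lemma~\ref{lem:anf}, and the region-by-region incidence count gives \eqref{eq:incidences}; then the case split on which of $a,b,c,e$ attains $p$ (using oddness from Lemma~\ref{lem:parities}), the odd-multiplicity argument for $L\ge\Delta$, and $d\le\wt(x)=a+b+e\le 3p$ give \eqref{eq:depth-chain}. One harmless slip: in the $abe$ group the third index lies in $P_{xyz}$, not in $P_{yz}\cup P_{xyz}$, though your region-signature argument is unaffected.
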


\begin{proof}
  The seven groups of monomials produced by $ABE,ACE,BCE,AE,BE,CE,E$ are
  pairwise disjoint, because the four regions are disjoint and each group
  draws one leg from a fixed set of them.  Counting the monomials containing
  a fixed qubit gives \eqref{eq:incidences}: a qubit of $P_{xy}$ occurs in
  $ABE$ ($be$ monomials), $ACE$ ($ce$) and $AE$ ($e$), and cyclically, while
  a qubit of $P_{xyz}$ occurs in all seven groups, contributing
  $ab+ac+bc$ from the cubic ones, $a+b+c$ from the quadratic ones and $1$
  from $E$.    By Lemma~\ref{lem:parities}, $a,b,c,e$ are odd, hence at least
  $1$.  If $p=e$ then $\delta_{xy}=e(b+c+1)\ge3e=3p$; if $p=a$ then
  $\delta_{xyz}\ge ab+ac+a=a(b+c+1)\ge3a=3p$, and cyclically for $b$ and
  $c$; so $\Delta\ge3p$ in every case.  For the depth, note that any
  $V\in\mathcal{V}$ realizing the phase has, by Lemma~\ref{lem:anf}, gate
  multiplicities whose parities are the algebraic-normal-form coefficients,
  so each gate of the reduced multiset occurs an odd and therefore positive
  number of times in $V$.  A qubit incident to $\Delta$ reduced gates is
  thus incident to at least $\Delta$ gate instances of $V$, and those
  pairwise share it and so occupy distinct layers, giving $L\ge\Delta$.
  Finally $x$ represents a nontrivial $X$-logical class, so
  $d\le\wt(x)=a+b+e\le3p$.
\end{proof}

Lowering the incidence therefore requires either leaving $\mathcal{V}$, by
introducing ancillas with non-diagonal gates such as CNOT or enlarging the
gate set, or weakening the requirement from equality on all of $\Ftwo^n$ to
equality only on $C_Z^\perp$ or on the code space.  The second alternative
stays inside $\mathcal{V}$: it realizes a different Boolean function that
happens to agree with the target where it matters.

For $p_0=1$ the count is seven, matching
$\CCZ_{1,2,4}\CCZ_{1,3,4}\CCZ_{2,3,4}\CZ_{1,4}\CZ_{2,4}\CZ_{3,4}Z_4$ for the
triple of Example~\ref{ex:4qubit}.  That case and the $p_0=3$ case of
Section~\ref{sec:hgp-example} were checked by exhaustive evaluation of
\eqref{eq:wirewise-phase} over all basis states of $S_t$, and the script
confirms that the synthesized gate list coincides with the support of the
algebraic normal form.

\begin{proposition}[Pauli conjugation]
  \label{prop:conjugation}
  Let $(x,y,z)$ be a saturated \emph{and $\CCZ$-admissible} triple and let
  $U_t$ be the circuit of Lemma~\ref{lem:local-realization}.  Then $\wt(h)$
  is odd for each $h\in\{x,y,z\}$, and
  \begin{equation}
    U_t\,X^{x}\,U_t^\dagger=X^{x}D_{yz},\qquad
    D_{yz}=\mathrm{diag}\bigl((-1)^{\ell_y(c)\ell_z(c)}\bigr),
  \end{equation}
  with the cyclic analogues for $X^y$ and $X^z$.
\end{proposition}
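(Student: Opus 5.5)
Two things need proving: every $h\in\{x,y,z\}$ has odd weight, and the conjugation identity holds. Both reduce to computations using the phase formula of Lemma~\ref{lem:local-realization} together with the parities from Lemma~\ref{lem:parities}.

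\textbf{Weight parity.} For the weights we use saturation, which gives $\wt(x)=s_{xy}+s_{xz}+s_{xyz}$. Lemma~\ref{lem:parities} makes each of the three terms odd, so the sum of three odd numbers is odd. The same argument gives the claim for $y$ and $z$.

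\textbf{Conjugation: action on a basis state.} Since $U_t$ is diagonal, it suffices to compare the two sides on a computational basis state $|c\rangle$. Write $f(c)=\ell_x(c)\ell_y(c)\ell_z(c)$, so that $U_t^\dagger=U_t$ and $X^x|c\rangle=|c+x\rangle$. Then
\[
U_tX^xU_t^\dagger|c\rangle=(-1)^{f(c)+f(c+x)}|c+x\rangle .
\]
Each form $\ell_h$ is linear, so $\ell_h(c+x)=\ell_h(c)+\inner{h}{x}$. Admissibility supplies the three inner products we need:
\begin{itemize}
\item $\inner{x}{y}=\inner{x}{z}=0$, so $\ell_y$ and $\ell_z$ are unchanged by the shift;
\item $\ell_x(c+x)=\ell_x(c)+\wt(x)\bmod 2=\ell_x(c)+1$, using the odd weight established above.
\end{itemize}
Hence
\[
f(c+x)=(\ell_x(c)+1)\ell_y(c)\ell_z(c)=f(c)+\ell_y(c)\ell_z(c),
\]
and the phase picked up is $(-1)^{\ell_y(c)\ell_z(c)}$.

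\textbf{Matching the right-hand side.} On the other side,
\[
X^xD_{yz}|c\rangle=(-1)^{\ell_y(c)\ell_z(c)}|c+x\rangle .
\]
This agrees with the left-hand side, so the two operators coincide on every basis state. Had the phase appeared evaluated at $c+x$ instead, it would be the same value, because $\ell_y$ and $\ell_z$ are invariant under adding $x$. So $D_{yz}$ commutes with $X^x$ and the ordering of the two factors does not matter.

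The cyclic analogues for $X^y$ and $X^z$ follow the same way from the symmetry of $f$ and of the admissibility conditions.

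The only subtle point is that $\ell_x(x)$ must equal $1$. This is exactly where saturation and the odd parities are needed: the self-inner product $\inner{x}{x}$ equals $\wt(x)\bmod 2$. Pairwise orthogonality is what removes the extra $\CZ$-type factors mentioned in Remark~\ref{rem:z-corrections}.
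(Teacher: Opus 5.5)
Your proof is correct and takes essentially the same route as the paper. Odd weight comes from saturation plus Lemma~\ref{lem:parities}; then evaluating $U_tX^xU_t^\dagger$ on $|c\rangle$, pairwise orthogonality together with $\ell_x(x)=\wt(x)\bmod 2=1$ gives the phase $(-1)^{\ell_y(c)\ell_z(c)}$. Your added observation that $D_{yz}$ commutes with $X^x$ is a harmless remark that the paper leaves implicit.
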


\begin{proof}
  By Lemma~\ref{lem:parities}, $\wt(x)=s_{xy}+s_{xz}+s_{xyz}$ is a sum of
  three odd numbers, hence odd.  $U_t$ is a product of commuting Hermitian
  involutions, so $U_t^\dagger=U_t$ and $U_t|c\rangle=(-1)^{Q(c)}|c\rangle$
  with $Q=\ell_x\ell_y\ell_z$.  Then $U_tX^xU_t|c\rangle
  =(-1)^{Q(c)+Q(c+x)}|c+x\rangle$.  Pairwise orthogonality gives
  $\ell_y(c+x)=\ell_y(c)$ and $\ell_z(c+x)=\ell_z(c)$, while
  $\ell_x(c+x)=\ell_x(c)+\wt(x)$, so
    $Q(c)+Q(c+x)=\wt(x)\ell_y(c)\ell_z(c)=\ell_y(c)\ell_z(c)$.
\end{proof}

Both assumptions enter.  Without admissibility, $x=y=z=(1,1)$ is saturated
with even weight.  Without saturation, adjoining a private coordinate to $x$
in Example~\ref{ex:4qubit} leaves the triple admissible with odd
shared-region parities but $\wt(x)=a+b+e+|P_x|=4$ even, and the identity then
fails wherever $\ell_y\ell_z=1$.  Pairwise orthogonality is needed on its own:
$x=y=z=(1)$ gives $U_t=Z$ and $ZXZ^\dagger=-X$, not $XZ$.  The script checks
all three.

This has the algebraic form of the $\CCZ$ conjugation rule, with $D_{yz}$ a
representative-level quadratic phase; it becomes an encoded $\CCZ$ relation
only once the linear forms $\ell_x,\ell_y,\ell_z$ are identified with logical
coordinates.  Code-space preservation, trivial action on the other logical
coordinates, and that identification are gathered as Hypothesis~(H3) in
Section~\ref{sec:main}, where Lemma~\ref{lem:codespace} settles the first.

\subsection{Unsaturated triples}
\label{sec:unsaturated}

Dropping saturation changes \eqref{eq:seven-terms} substantially.  With
$\beta_x=\sum_{P_x}c_i$ and so on, the Zhegalkin expansion of
$\ell_x\ell_y\ell_z$ has, as a full symbolic expansion in the seven region-parity variables,
$38$ surviving monomials rather than seven, the $31$ additional ones each
involving a private region; particular terms drop out when the
corresponding regions are empty.  Among them is $\beta_x\beta_y\beta_z$, which
expands into $|P_x||P_y||P_z|$ cubic monomials.  By
Lemma~\ref{lem:anf} those monomials are present in the algebraic normal form
of $\ell_x\ell_y\ell_z$ and so appear in \emph{every} circuit of
$\mathcal{V}$ realizing the phase, whence a qubit of $P_x$ is incident to at
least $|P_y||P_z|$ gates.  This is a lower bound rather than a defect of one
expansion: bounding the four shared regions cannot bound gate incidence once
the private regions grow, for any ancilla-free $Z$-$\CZ$-$\CCZ$ circuit.
Computing each $\ell_h$ onto an ancilla by a fan-in tree, applying one
$\CCZ$ and uncomputing leaves $\mathcal{V}$ and avoids the blowup in gate
count, at the price of depth $O(\log|S_t|)$.  Open Problem~2 asks what an
ancilla-assisted route can achieve.

\section{Two Scaling Obstructions}
\label{sec:obstructions}

Corollary~\ref{cor:exact} already rules out constant depth for a single
saturated admissible triple inside the ancilla-free model.  The two counting
arguments below are independent of the circuit model: they concern how many
candidate triples a code family can supply at all, and they close off the
natural strategy of packing many triples with disjoint supports into one
parallel layer.

\begin{proposition}[Saturation caps the distance]
  \label{prop:constant-distance}
  Let $(x,y,z)$ be saturated with shared-region bound $p_0$.  Then
  \begin{equation}
    |S_t|\le 4p_0,\qquad \wt(x),\wt(y),\wt(z)\le 3p_0 .
  \end{equation}
  If in addition the classes of $x,y,z$ are nontrivial in
  $\mathcal{L}_X$, the code satisfies $d\le 3p_0$.  Consequently a family
  $\{\cQ_n\}$ with $d_n\to\infty$ admits no saturated admissible triples
  of constant shared-region bound, and one with $d_n=\Theta(n)$ admits
  none with $p_0=o(n)$.
\end{proposition}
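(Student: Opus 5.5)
The argument is a direct count over the seven-region decomposition of Section~\ref{sec:local}. Saturation means $P_x=P_y=P_z=\emptyset$, so $S_t$ is the disjoint union of $P_{xy},P_{xz},P_{yz},P_{xyz}$. Each of these has size at most $p_0$, which gives $|S_t|\le 4p_0$ at once. Likewise $\supp(x)$ is the disjoint union of $P_x=\emptyset$, $P_{xy}$, $P_{xz}$ and $P_{xyz}$. Hence
\[
\wt(x)=s_{xy}+s_{xz}+s_{xyz}\le 3p_0 ,
\]
and the same count gives the bounds for $y$ and $z$. This is the same identity used in the proof of Proposition~\ref{prop:conjugation}.

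For the distance bound, note that $x\in C_Z^\perp$ and that its class in $\mathcal{L}_X$ is nontrivial, so $x\in C_Z^\perp\setminus C_X$. The definition of $d$ as a minimum weight over $(C_Z^\perp\setminus C_X)\cup(C_X^\perp\setminus C_Z)$ then gives $d\le\wt(x)\le 3p_0$. Nontriviality of one of the three classes is enough. For an admissible triple it is automatic, because Definition~\ref{def:admissible} requires the three classes to be linearly independent, so none of them is zero.

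The two consequences follow by contraposition. Suppose a family with $d_n\to\infty$ had saturated admissible triples with a uniform constant bound $p_0$. Then $d_n\le 3p_0$ for all such $n$, which is a contradiction. Similarly, if $d_n\ge cn$ for some $c>0$ and the triples have $p_0=p_0(n)=o(n)$, then $cn\le 3p_0(n)=o(n)$, which fails for large $n$. I would state this "admits none" claim as eventual, that is, for all sufficiently large $n$, since that is the only sense in which it can hold.

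There is no real obstacle here. The only point needing care is the interpretation of the asymptotic clauses: they hold for sufficiently large $n$, and the bound $p_0$ is understood uniformly along the family.
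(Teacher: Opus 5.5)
Your proposal is correct and follows the paper's own argument. You decompose $S_t$ and $\supp(x)$ as disjoint unions of the shared regions to get $|S_t|\le4p_0$ and $\wt(x)=s_{xy}+s_{xz}+s_{xyz}\le3p_0$, and then use $x\in C_Z^\perp\setminus C_X$ to conclude $d\le3p_0$. Your explicit contrapositive, read eventually in $n$, and your remark that linear independence in Definition~\ref{def:admissible} supplies the nontriviality spell out steps the paper leaves implicit.
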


\begin{proof}
  Saturation gives $S_t=P_{xy}\sqcup P_{xz}\sqcup P_{yz}\sqcup P_{xyz}$,
  so $|S_t|=s_{xy}+s_{xz}+s_{yz}+s_{xyz}\le4p_0$, and
  $\wt(x)=s_{xy}+s_{xz}+s_{xyz}\le3p_0$.  A nontrivial class in
  $\mathcal{L}_X$ has a representative of weight at most $3p_0$, and the
  distance is the minimum weight over all such representatives, whence
  $d\le3p_0$.
\end{proof}

\begin{proposition}[Incidence bound on candidate collections]
  \label{prop:incidence}
  Let $\cS$ be a collection of triples with combined supports satisfying
  $|S_t|\ge\sigma$ for all $t$, and suppose every qubit lies in at most
  $M$ of the supports.    Then $|\cS|\le Mn/\sigma$.  In particular, if $M=O(1)$ and $\sigma=an$
  then $|\cS|\le M/a=O(1)$; more generally $M=O(1)$ with $\sigma\ge1$ gives
  $|\cS|=O(n)$, so no collection of size $\Omega(n^{1+\gamma})$ with
  $\gamma>0$ satisfies both conditions.
\end{proposition}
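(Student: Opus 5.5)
The plan is a double-counting argument on the incidence relation between qubits and combined supports. Consider the bipartite incidence set
\begin{equation*}
  I=\{(i,t): t\in\cS,\ i\in S_t\}\subseteq[n]\times\cS .
\end{equation*}
We count $|I|$ in two ways. Summing over triples, each $t\in\cS$ contributes $|S_t|\ge\sigma$ pairs, so $|I|\ge\sigma|\cS|$. Summing over qubits, each qubit $i$ lies in at most $M$ of the supports, so $|I|\le Mn$. Combining the two gives $\sigma|\cS|\le Mn$, that is $|\cS|\le Mn/\sigma$. This is the whole of the main bound. The only care needed is to treat $\cS$ as a collection in which distinct triples may share a combined support: each such triple is counted separately on both sides, so the inequality is unaffected.

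The special cases then follow by substitution. If $\sigma=an$ with $a>0$ constant, then $|\cS|\le Mn/(an)=M/a$, which is $O(1)$ whenever $M=O(1)$. In general, $\sigma\ge1$ holds automatically for any triple with a nonempty combined support, for instance any triple containing a nonzero representative. Under that condition $|\cS|\le Mn=O(n)$ when $M=O(1)$. A collection of size $\Omega(n^{1+\gamma})$ with $\gamma>0$ would then exceed $Mn$ for large $n$, a contradiction.

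There is no substantive obstacle. The one point worth stating explicitly is that the participation bound $M$ concerns membership in the combined supports $S_t$, not gate incidence, so the argument is independent of the circuit model. I would note this in one sentence to connect the result to the remark opening Section~\ref{sec:obstructions}.
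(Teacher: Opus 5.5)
Your proposal is correct and matches the paper's proof: both double-count the incidence set $\{(i,t): i\in S_t\}$, bounding it below by $\sigma|\cS|$ via the support sizes and above by $Mn$ via the participation bound. The special cases then follow by direct substitution, exactly as you describe.
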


\begin{proof}
  Count incidences $I=|\{(i,t):i\in S_t\}|$ two ways:
  $I=\sum_{t}|S_t|\ge\sigma|\cS|$ and $I=\sum_i|\{t:i\in S_t\}|\le Mn$.
\end{proof}

Together the propositions rule out the bounded-region saturated route and
the bounded-participation linear-support route.  By
Proposition~\ref{prop:constant-distance}, the saturated bounded-region
synthesis has $|S_t|=O(1)$ and can therefore occur only on codes with
$d=O(1)$; by Proposition~\ref{prop:incidence}, linear-weight supports,
which linear distance requires, admit only $O(1)$ candidates once
participation is bounded.  Within the disjoint-support strategy considered
here, either the saturated supports are small, in which case the code has
bounded distance, or linear-size supports permit only constantly many
candidates under bounded participation.  Obstructions of a different kind, based on the topology of product codes
rather than on counting, appear in~\cite{fu2025nogo}.  A greedy packing lemma
remains valid in the small-support regime and is recorded next for
completeness, but by Proposition~\ref{prop:constant-distance} it applies
only to families of bounded distance.  This leaves open the unsaturated
bounded-congestion constructions posed in Section~\ref{sec:implications}.

\begin{lemma}[Packing, small-support regime]
  \label{lem:packing-main}
  Let $\{S_t\}_{t\in\cS}$ satisfy $|S_t|\le s_0$ and let every qubit lie
  in at most $M$ supports.  Then some $\cT\subseteq\cS$ has pairwise
  disjoint supports with $|\cT|\ge|\cS|/(Ms_0)$.
\end{lemma}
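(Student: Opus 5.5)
The plan is to build a conflict graph on the triples and apply the standard greedy bound for independent sets in bounded-degree graphs. Define $G$ on vertex set $\cS$, joining $t\neq t'$ whenever $S_t\cap S_{t'}\neq\emptyset$. A subset $\cT\subseteq\cS$ has pairwise disjoint supports exactly when it is an independent set in $G$, so the task reduces to bounding the maximum degree of $G$ and then finding a large independent set.

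\textbf{Degree bound.} Fix $t$. Every neighbour $t'$ of $t$ shares at least one qubit $i\in S_t$ with it. Each of the at most $s_0$ qubits of $S_t$ lies in at most $M$ supports, and one of these is $S_t$ itself, so it contributes at most $M-1$ other triples. Hence
\[
  \deg_G(t)\le s_0(M-1)\le Ms_0-1
\]
whenever $s_0\ge1$. If $s_0=0$ or $M=0$ the statement degenerates: empty supports are trivially disjoint, and $M=0$ forces every support to be empty. I would dispose of these cases separately, or assume $s_0,M\ge1$.

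\textbf{Greedy selection.} Repeatedly pick any remaining $t$, put it into $\cT$, and delete $t$ together with all its neighbours. Each step removes at most $1+\deg_G(t)\le Ms_0$ elements of $\cS$. The process ends only when $\cS$ is exhausted, so the number of steps satisfies $|\cT|\ge|\cS|/(Ms_0)$. By construction no two chosen triples are adjacent, so their supports are pairwise disjoint.

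The only point needing care is the degree count. It must use the fact that $t$ itself is among the $M$ supports through each qubit, which gives the $-1$ and makes the ratio exactly $Ms_0$ rather than $Ms_0+1$. Everything else is routine, so I expect no real obstacle.
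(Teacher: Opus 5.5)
Your proof is correct and is essentially the paper's argument: greedily select a triple, discard every triple whose support meets it, and note that each step removes at most $Ms_0$ triples. The conflict-graph phrasing and the count $1+s_0(M-1)\le Ms_0$ are a slightly more careful version of the paper's bound $M|S_t|\le Ms_0$, which already counts $t$ among the $M$ supports through each of its qubits.
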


\begin{proof}
  Greedily select $t$, add it to $\cT$, and discard every $t'$ with
  $S_{t'}\cap S_t\ne\emptyset$.  Each selection discards at most
  $M|S_t|\le Ms_0$ elements, so $|\cS|\le|\cT|Ms_0$.
\end{proof}

Support disjointness is stronger than the scheduling argument requires.
What the coloring argument of the next section consumes is bounded
incidence of qubits in \emph{gates}, which candidate triples with heavily
overlapping supports can still satisfy.

\section{Parallel Logical $\CCZ$ from Bounded Congestion}
\label{sec:main}

\subsection{Interaction hypergraphs}

A diagonal circuit built from $Z$, $\CZ$ and $\CCZ$ gates is described by
a hypergraph $H=(V,E)$ with $V=[n]$ and $E$ a multiset of edges of rank at
most three, one edge per gate.  The incidence of a qubit is
$\deg_H(v)=|\{e\in E: v\in e\}|$ and $\Delta(H)=\max_v\deg_H(v)$.  These gates all commute; under the standard circuit-layer model two may
share a layer when their supports are disjoint.  Rank at most three, rather
than exact $3$-uniformity, is what
Lemma~\ref{lem:local-realization} produces, and the coloring bound is
insensitive to the difference.

\begin{lemma}[Coloring and depth]
  \label{lem:coloring-main}
  A hypergraph $H$ of rank at most $q$ with $\Delta(H)\le\Delta$ admits a
  proper edge coloring with at most $q(\Delta-1)+1$ colors, hence its
  gates execute in a circuit of depth at most $q(\Delta-1)+1$.
\end{lemma}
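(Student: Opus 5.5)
The plan is a greedy edge coloring, the hypergraph analogue of the line-graph bound. We form the conflict graph $G$ whose vertices are the edges of $H$, treated as a multiset so that repeated gates are distinct vertices. Two vertices of $G$ are adjacent when the corresponding edges share a qubit. A proper edge coloring of $H$ is exactly a proper vertex coloring of $G$. Any graph of maximum degree $\Delta_G$ can be colored greedily with $\Delta_G+1$ colors: process the vertices in any order and give each the least color not used by an already-colored neighbour. So it suffices to show $\Delta_G\le q(\Delta-1)$.

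For the degree bound, fix an edge $e$ of $H$. It has $|e|\le q$ legs. Every edge $e'\neq e$ that meets $e$ contains some leg $v\in e$. The number of edges other than $e$ containing a given $v$ is $\deg_H(v)-1\le\Delta-1$, because $e$ itself is one of the $\deg_H(v)$ edges at $v$. A union bound over the legs of $e$ then gives at most $|e|(\Delta-1)\le q(\Delta-1)$ conflicting edges. The bound holds for multisets and for edges of rank smaller than $q$, since those only reduce $|e|$. We should note that the union bound overcounts edges meeting $e$ in several legs. That only helps.

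For the depth statement, each color class is a set of gates with pairwise disjoint supports. Under the layer model stated just before the lemma, such gates may share one layer. All gates are diagonal and commute, so executing the color classes one after another in any order realizes the same unitary as the original product. The resulting depth equals the number of colors, at most $q(\Delta-1)+1$.

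The argument is routine. The only step needing care is the count at $v$: we must subtract $e$ itself, which is what produces $\Delta-1$ rather than $\Delta$, and we must handle the degenerate case $\Delta=0$, where there are no edges and depth $0\le 1$ trivially. It is also worth noting that commutation is essential to the depth claim. Without it, reordering gates into color layers could change the circuit.
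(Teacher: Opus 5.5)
Your argument is correct and is the same greedy coloring as the paper's proof: each edge meets at most $\Delta-1$ other edges through each of its at most $q$ legs, so at most $q(\Delta-1)$ colors are excluded when it is reached, and each color class is one layer of disjoint, commuting diagonal gates. One small slip is in your $\Delta=0$ remark: the lemma's bound there is $q(\Delta-1)+1=1-q$, not $1$ (you are quoting your conflict-graph bound $\Delta_G+1$), so the edgeless case is a degenerate exception that the lemma implicitly excludes by taking $\Delta\ge1$, rather than a case it trivially covers.
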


\begin{proof}
  Color greedily in any order.  An edge $e$ meets at most $\Delta-1$ other
  edges through each of its at most $q$ vertices, so at most $q(\Delta-1)$
  colors are excluded when $e$ is reached.  Each color class consists of
  pairwise disjoint edges and forms one parallel layer.
\end{proof}

For $q=3$ this gives $3\Delta-2$ colors.  Every $Z$, $\CZ$ and $\CCZ$ gate
already appears as an edge of rank one, two or three in the same
hypergraph, so no separate layer for the lower-rank terms is needed.  The
bound is worst case; under additional near-regularity and small-codegree
hypotheses, asymptotically sharper edge-colouring bounds are
known~\cite{pippenger1989asymptotic}, and would reduce the constant rather
than the scaling.

\subsection{Scheduling proposition}

\begin{proposition}[Parallel scheduling under bounded gate congestion]
  \label{prop:scheduling}
    Let $\cQ$ be a CSS code on $n$ qubits with distance $d$, and let $\cT$
  index a finite collection of triples of logical qubits.  For each
  $t\in\cT$ choose physical representatives $x_t,y_t,z_t\in C_Z^\perp$ of
  the three classes, put
  $S_t=\supp(x_t)\cup\supp(y_t)\cup\supp(z_t)$, and let $U_t$ be an
  associated physical diagonal circuit.  Assume:
  \begin{itemize}
        \item[(H1)] \emph{Realizability.}  Each $U_t$ is built from $Z$, $\CZ$
      and $\CCZ$ gates supported on $S_t$ and satisfies
      \eqref{eq:wirewise-phase}.  This ties the statement to
      Section~\ref{sec:local}; once the circuits $U_t$ are given, only
      (H2)--(H4) enter the argument below.
    \item[(H2)] \emph{Bounded congestion.}  The union $H=(V,E)$ of the
      gate edges of $\{U_t\}_{t\in\cT}$, a hypergraph of rank at most
      three, has $\Delta(H)\le\Delta$ for a constant $\Delta$.
    \item[(H3)] \emph{Logical action.}  Each $U_t$ preserves the code
      space of $\cQ$, acts as the logical $\CCZ$ on the three logical
      qubits addressed by the classes of $x_t,y_t,z_t$ in a fixed
      symplectic basis, and acts trivially on the remaining logical
      qubits.
        \item[(H4)] \emph{Disjoint logical addressing.}  In the fixed logical
      basis of (H3), the three classes assigned to each $t$ are three
      distinct logical-$X$ basis elements, and the basis elements assigned
      to distinct $t$ are disjoint.  This implies linear independence of
      the $3|\cT|$ classes involved, and is strictly stronger.
  \end{itemize}
    Then the gates of $\{U_t\}$ execute in a circuit $U$ of depth
  $L\le3\Delta-2$ implementing $\bigotimes_{t\in\cT}\CCZ_t$ on the
    addressed logical qubits.  Moreover, for a fault occurring at one time
  step of $U$ and supported on $w$ qubits, the propagated fault operator can
  be supported on at most $3^Lw$ output qubits.
\end{proposition}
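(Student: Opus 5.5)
The proof splits into three parts: scheduling, logical action, and fault propagation.

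\emph{Scheduling.} All gates of all $U_t$ are diagonal products of $Z$, $\CZ$ and $\CCZ$, so any two commute. The product $\prod_t U_t$ is therefore the same operator in every ordering and layering of the combined gate multiset. I will view that multiset as the rank-at-most-three hypergraph $H$ of (H2) and apply Lemma~\ref{lem:coloring-main} with $q=3$ and $\Delta(H)\le\Delta$. This gives a proper edge colouring with at most $3(\Delta-1)+1=3\Delta-2$ colours. Each colour class consists of pairwise disjoint gates and so forms one layer. Concatenating the layers in any order gives a circuit $U$ of depth $L\le 3\Delta-2$ with $U=\prod_t U_t$. If the same edge occurs in two different $U_t$, it appears twice in the multiset $E$; both instances are kept, and both count toward $\Delta$, consistent with (H2).

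\emph{Logical action.} By (H3), each $U_t$ preserves the code space, so the product does too. On the code space, $U$ is therefore the product of the logical operators $\overline{U_t}$. In the fixed symplectic basis, (H3) says $\overline{U_t}$ equals $\CCZ$ on the three logical qubits addressed by $t$ tensored with the identity on the rest. By (H4), the addressed triples are pairwise disjoint. Hence $\prod_t \overline{U_t}=\bigotimes_{t}\CCZ_t\otimes I$, because commuting tensor factors acting on disjoint logical qubits compose to their tensor product. The step that needs care here is that the logical operators compose correctly. This follows from code-space invariance: restricting $U_t$ to the code space is multiplicative, so restriction commutes with taking the product.

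\emph{Fault propagation.} Suppose a fault $E$ occurs after layer $j$, supported on a set $W$ with $|W|=w$. The propagated operator is $U_{>j}EU_{>j}^\dagger$, where $U_{>j}$ is the product of the remaining layers. It is enough to track supports one layer at a time. Conjugating by a layer of disjoint gates enlarges a support $A$ only to the union of the legs of gates that meet $A$; gates disjoint from $A$ commute with the operator and cancel. Each gate has at most three legs, and within one layer each qubit lies in at most one gate. So the support grows from $|A|$ to at most $3|A|$ per layer. After at most $L$ layers the support is at most $3^{L}w$; a fault at the very start is the worst case.

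The main obstacle is bookkeeping rather than a hard inequality. One must check that repeated edges and cancellations across different $U_t$ do not invalidate (H1) or the incidence count, and that the ``one gate per qubit per layer'' property is exactly what the fault-growth bound needs. The intended claim is the worst-case bound, so the factor $3$ per layer is crude but suffices.
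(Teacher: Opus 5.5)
Your proposal is correct and follows the paper's proof essentially step for step. It uses the greedy edge colouring of Lemma~\ref{lem:coloring-main} with $q=3$, composes the logical actions using (H3) together with the disjointness in (H4), and bounds fault spread by a factor of $3$ per layer; you re-derive that last bound inline, where the paper simply cites Lemma~\ref{lem:light-cone}.
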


\begin{proof}
    (H2) and Lemma~\ref{lem:coloring-main} with $q=3$ give a proper edge
  coloring in at most $3\Delta-2$ colors, and executing one color class per
  time step yields a circuit of that depth; the rank-one and rank-two
  edges are coloured alongside the rank-three ones and need no extra
  layer.  By (H3) each $U_t$ implements the logical
  $\CCZ$ on its own three logical qubits and the identity elsewhere, and
  by (H4) these triples of logical qubits are pairwise distinct, so the
  logical actions compose to $\bigotimes_t\CCZ_t$ regardless of the order
  in which color classes are applied, the gates being diagonal and hence
  commuting.  The support statement is Lemma~\ref{lem:light-cone} with
  arity $q=3$.
\end{proof}

\begin{remark}[What is assumed]
  \label{rem:assumptions}
      (H1) always holds: $\ell_x\ell_y\ell_z$ has degree at most three, so its
  algebraic normal form is realizable by $Z$, $\CZ$ and $\CCZ$ gates for any
  triple, and by Lemma~\ref{lem:anf} that realization is the only reduced one
  inside the model.  For a fixed triple there is therefore no choice to make;
  the question inside the model is whether a code family supplies triples
  whose ANF realizations collectively have bounded congestion, and
  Corollary~\ref{cor:exact} answers it negatively for saturated admissible
  triples on growing-distance families.  Lowering the congestion of a fixed
  phase requires either leaving the model, through ancillas with non-diagonal
  gates or a larger gate set, or asking only for equality on the code space,
  which stays inside it.    (H2)
  replaces the disjoint-support hypothesis ruled out in
  Section~\ref{sec:obstructions}: it constrains the physical gate schedule,
  not logical supports, so triples with heavily overlapping supports may
  still meet it.  (H3) is the substantive gap and does not follow from
  Definition~\ref{def:admissible}, since
  Proposition~\ref{prop:conjugation} verifies only the conjugation of
  $X^{x_t},X^{y_t},X^{z_t}$, while triviality on spectator logical qubits
  needs the global structure of addressable-orthogonality
  frameworks~\cite{he2025quantum}.
\end{remark}

Part of (H3) is decidable directly.  Since every $U_t$ here is diagonal, the
following gives an exact checkable criterion for code-space preservation,
applied to a concrete code in Section~\ref{sec:hgp-example}.

\begin{lemma}[Code-space preservation for diagonal circuits]
  \label{lem:codespace}
  Let $U$ be diagonal with $U|c\rangle=(-1)^{Q(c)}|c\rangle$ for a function
  $Q:\Ftwo^n\to\Ftwo$.  Then $U$ maps the code space of
  $\mathrm{CSS}(C_X,C_Z)$ into itself if and only if $Q$ is constant on
  every coset $c+C_X$ with $c\in C_Z^\perp$.  In particular, if some
  $u\in C_X$ has $Q(u)\neq Q(0)$, then $U$ does not preserve the code space.
\end{lemma}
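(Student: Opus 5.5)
The plan is to write the code space explicitly in the computational basis and compare $U$ on each logical basis state with the code space. Fix a transversal $R$ of $C_X$ in $C_Z^\perp$. The code space of $\mathrm{CSS}(C_X,C_Z)$ is spanned by the orthonormal states
\begin{equation*}
  |\bar c\rangle=|C_X|^{-1/2}\sum_{u\in C_X}|c+u\rangle,\qquad c\in R .
\end{equation*}
These are supported on pairwise disjoint cosets $c+C_X$, and each is the unique code state, up to scalar, supported on its coset. Applying the diagonal $U$ gives
\begin{equation*}
  U|\bar c\rangle=|C_X|^{-1/2}\sum_{u\in C_X}(-1)^{Q(c+u)}|c+u\rangle .
\end{equation*}

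\textbf{Sufficiency.} If $Q$ is constant on $c+C_X$, then $U|\bar c\rangle=(-1)^{Q(c)}|\bar c\rangle$. By linearity $U$ then preserves the code space.

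\textbf{Necessity.} Suppose $U$ maps the code space into itself. Since $U$ is diagonal, $U|\bar c\rangle$ is still supported on $c+C_X$. A code-space vector with that support must be a multiple of $|\bar c\rangle$. To see this, expand it in the basis $\{|\bar c'\rangle\}$: the disjointness of the cosets kills every coefficient with $c'\neq c$. Hence $U|\bar c\rangle=\lambda|\bar c\rangle$. Comparing amplitudes, $(-1)^{Q(c+u)}=\lambda$ for every $u\in C_X$, so $Q$ is constant on the coset.

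Equivalently, one can check stabilizer membership. $U|\bar c\rangle$ is fixed by each $X^g$ with $g\in C_X$ only when $Q(c+u+g)=Q(c+u)$ for all $u$. It is automatically fixed by the $Z$-stabilizers, since its support lies in $C_Z^\perp$.

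\textbf{The ``in particular'' clause.} Take the coset $c=0$, which lies in $C_Z^\perp$. If $Q(u)\neq Q(0)$ for some $u\in C_X$, then $Q$ is not constant on $0+C_X$, so $U$ does not preserve the code space.

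The only step needing care is the claim that a code-space vector supported on a single coset is proportional to $|\bar c\rangle$. It follows directly from the orthogonal decomposition above, so I expect no real obstacle.
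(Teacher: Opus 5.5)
Your proof is correct and follows essentially the paper's route: both write the code space in the coset states $|\bar c\rangle$, observe that the diagonal $U$ keeps each one supported on $c+C_X$, deduce that $Q$ is constant on the coset, and get the final clause from the $c=0$ coset. The paper argues necessity through invariance of $U|\bar c\rangle$ under every $X$-stabilizer $X^{u'}$, which is exactly your ``equivalently'' paragraph; your main argument, that $U|\bar c\rangle$ must be proportional to $|\bar c\rangle$, is an equally valid repackaging of the same disjoint-support observation.
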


Stability already delivers this.  Condition (S2) places $x,y,z$ in $D^\perp$,
and $C_X\subseteq D$ because $u\odot u=u$, so $x,y,z\in C_X^\perp$; then
$\ell_x(c+u)=\ell_x(c)+\inner{x}{u}=\ell_x(c)$ for every $u\in C_X$, and
likewise for $y$ and $z$, so $Q=\ell_x\ell_y\ell_z$ is constant on each
coset.  A stably admissible triple therefore satisfies the first clause of
(H3) automatically, and only the identification of the induced logical phase
remains open.

\begin{proof}
  A logical basis state is $|\bar c\rangle=|C_X|^{-1/2}\sum_{u\in C_X}
  |c+u\rangle$ for $c\in C_Z^\perp$, one per coset of $C_X$ in $C_Z^\perp$.
  Then $U|\bar c\rangle=|C_X|^{-1/2}\sum_u(-1)^{Q(c+u)}|c+u\rangle$, and
  applying $X^{u'}$ for $u'\in C_X$ permutes the summands, returning the
  same state exactly when $Q(c+u+u')=Q(c+u)$ for all $u$.  So $U|\bar
  c\rangle$ is fixed by every $X$-stabilizer precisely when $Q$ is constant
  on $c+C_X$; it remains a $+1$ eigenstate of every $Z$-type stabilizer because each
  basis vector of $c+C_X$ lies in $C_Z^\perp$ and a diagonal circuit does not
  change computational-basis support.  Since $U$ is diagonal
  it maps distinct cosets to disjointly supported vectors, so the condition
  is needed for each $c$ separately.  Taking $c=0\in C_Z^\perp$ gives the
  last claim.
\end{proof}

\begin{remark}[Error spread, not distance preservation]
  \label{rem:distance}
        Hypothesis (H3) makes each $U_t$, hence $U$, map the code space to itself,
  so $U\cQ=\cQ$ is the same stabilizer code with the same distance $d$.
  Write a single fault as $E$ inserted between two segments of the circuit,
  $U=U_2U_1$.  Then the faulty output is
  $U_2EU_1|\psi\rangle=(U_2EU_2^\dagger)\,U|\psi\rangle$, so the deviation
  operator $U_2EU_2^\dagger$ acts on the ideal output $U|\psi\rangle\in\cQ$
  and, by Lemma~\ref{lem:light-cone}, is supported on at most $t=3^Lw$
  qubits.  Ideal recovery then corrects it whenever $2t<d$, and this holds
  for an arbitrary operator on those qubits, not merely a Pauli one, since
  any such operator expands in the Pauli basis supported there; that $U$ is
  non-Clifford is immaterial.  What the light-cone bound does not deliver is
  circuit-level fault tolerance: it says nothing about several faults at
  different spacetime locations, faulty syndrome extraction, or the recovery
    operation itself.  We use the term \emph{light-cone protection} for what is
  proved and avoid \emph{distance preservation}.
\end{remark}

The proposition isolates the scheduling consequence of bounded congestion;
its substantive assumption is (H3).  Fig.~\ref{fig:scheduling} illustrates
the coloring.

\begin{figure}[t]
  \centering
  \begin{tikzpicture}[scale=0.92,
    qubit/.style={circle,draw,minimum size=0.8em,inner sep=0pt,font=\scriptsize},
    hub/.style={circle,fill=black,inner sep=0.7pt},
    edgeA/.style={thick},
    edgeB/.style={thick,dashed},
    edgeC/.style={thick,dotted}
  ]
    \foreach \i in {1,...,9} {\node[qubit] (q\i) at (0.7*\i,0) {\i};}
    \node[hub] (h1) at (1.63,0.9) {};
    \draw[edgeA] (h1) -- (q1); \draw[edgeA] (h1) -- (q2); \draw[edgeA] (h1) -- (q4);
    \node[hub] (h4) at (4.90,0.9) {};
    \draw[edgeA] (h4) -- (q5); \draw[edgeA] (h4) -- (q7); \draw[edgeA] (h4) -- (q9);
    \node[hub] (h2) at (2.33,1.5) {};
    \draw[edgeB] (h2) -- (q2); \draw[edgeB] (h2) -- (q3); \draw[edgeB] (h2) -- (q5);
    \node[hub] (h3) at (3.97,1.5) {};
    \draw[edgeB] (h3) -- (q4); \draw[edgeB] (h3) -- (q6); \draw[edgeB] (h3) -- (q7);
    \node[hub] (h5) at (3.97,2.1) {};
    \draw[edgeC] (h5) -- (q2); \draw[edgeC] (h5) -- (q6); \draw[edgeC] (h5) -- (q9);
    \node[anchor=west,font=\scriptsize] at (6.6,0.9) {Layer 1};
    \node[anchor=west,font=\scriptsize] at (6.6,1.5) {Layer 2};
    \node[anchor=west,font=\scriptsize] at (6.6,2.1) {Layer 3};
  \end{tikzpicture}
  \caption{Edge coloring of five $\CCZ$ gates on nine qubits.  Each gate is
    drawn as a fan-out from a hub to its three legs.  Qubit $2$ has
        incidence $3$, so Lemma~\ref{lem:coloring-main} guarantees at most
    $3\Delta-2=7$ layers; three suffice here, the bound being worst case.}
  \label{fig:scheduling}
\end{figure}

\section{Numerical Results}
\label{sec:examples}

All numerical values in this section are generated by the accompanying software.

\begin{example}[Four-qubit representative triple]
  \label{ex:4qubit}
  Take $C_X=C_Z=\{0\}$ on $n=4$ qubits, so every nonzero vector represents
  a nontrivial $X$-logical class and Proposition~\ref{prop:stability} holds
  vacuously.  With
  \begin{equation}
    x=(0,1,1,1),\quad y=(1,0,1,1),\quad z=(1,1,0,1),
  \end{equation}
    the three pairwise inner products vanish and $\tau(x,y,z)=1$, so the
  triple is $\CCZ$-admissible.  It is saturated with
  $s_{xy}=s_{xz}=s_{yz}=s_{xyz}=1$, giving $p_0=1$ and combined support
  $S_t=\{1,2,3,4\}$ of size $4$.  Lemma~\ref{lem:local-realization}
  synthesizes seven gates, three $\CCZ$, three $\CZ$ and one $Z$, with
  qubit $4$ incident to all seven, matching $1+3p_0+3p_0^2=7$.
  Fig.~\ref{fig:triple} shows the supports.

  Here the logical action can be pinned down outright, which the
  larger examples do not permit.  Adjoin $r=(1,1,1,0)$.  A direct
  computation gives $\inner{x}{x}=\inner{y}{y}=\inner{z}{z}=\inner{r}{r}=1$
  with all six cross products zero, so $\{x,y,z,r\}$ is a self-dual basis
  of $\Ftwo^4$.  Coordinates in that basis are exactly
  $\bigl(\ell_x(c),\ell_y(c),\ell_z(c),\ell_r(c)\bigr)$, and since
  $C_X=\{0\}$ collapses physical and logical bits, the phase
  $(-1)^{\ell_x(c)\ell_y(c)\ell_z(c)}$ of \eqref{eq:wirewise-phase} is
  logical $\CCZ$ on the first three logical qubits and the identity on the
  fourth.  Hypotheses (H3) and (H4) of Proposition~\ref{prop:scheduling} therefore
  hold for this triple, which is what makes
  Examples~\ref{ex:disjoint} and~\ref{ex:scaling} statements about logical
  operations rather than about phases alone.
\end{example}

\begin{figure}[t]
  \centering
  \begin{tikzpicture}[scale=0.9,
    qubit/.style={circle,draw,minimum size=0.9em,inner sep=0pt},
    xmark/.style={very thick},
    ymark/.style={very thick,dashed},
    zmark/.style={very thick,dotted},
  ]
    \node[qubit] (q1) at (0,0)   {1};
    \node[qubit] (q2) at (1.8,0) {2};
    \node[qubit] (q3) at (3.6,0) {3};
    \node[qubit] (q4) at (5.4,0) {4};
    \draw[xmark] (q2) circle [radius=0.45];
    \draw[xmark] (q3) circle [radius=0.45];
    \draw[xmark] (q4) circle [radius=0.45];
    \draw[ymark] (q1) circle [radius=0.60];
    \draw[ymark] (q3) circle [radius=0.60];
    \draw[ymark] (q4) circle [radius=0.60];
    \draw[zmark] (q1) circle [radius=0.75];
    \draw[zmark] (q2) circle [radius=0.75];
    \draw[zmark] (q4) circle [radius=0.75];
    \draw[xmark] (-0.3,1.3) -- ++(0.4,0);
    \node[anchor=west] at (0.2,1.3) {$x$};
    \draw[ymark] (1.5,1.3) -- ++(0.4,0);
    \node[anchor=west] at (2.0,1.3) {$y$};
    \draw[zmark] (3.0,1.3) -- ++(0.4,0);
    \node[anchor=west] at (3.5,1.3) {$z$};
  \end{tikzpicture}
  \caption{Supports of the triple of Example~\ref{ex:4qubit}.  Solid,
    dashed and dotted rings mark $\supp(x)$, $\supp(y)$, $\supp(z)$.
    Qubit $4$ is the triple-shared region; qubits $1,2,3$ are the three
    pairwise-shared regions; no private region is nonempty.}
  \label{fig:triple}
\end{figure}

\begin{example}[Two blocks at no additional depth]
  \label{ex:disjoint}
  Take two disjoint copies of Example~\ref{ex:4qubit}, on qubits
  $\{1,\dots,4\}$ and $\{5,\dots,8\}$, giving $14$ gates.  Within each
  copy all seven synthesized gates touch the triple-shared qubit, qubit
  $4$ and qubit $8$ respectively, so this monomial synthesis has depth
  seven.  The two copies are physically disjoint, so corresponding gates
  from both schedule into the same layers, and two copies need the same
  depth seven as one.  Depth here responds to congestion, not to gate
  count, which is what (H2) tracks.
\end{example}

\begin{example}[Block stacking and the distance cap]
  \label{ex:scaling}
  Stacking $m$ disjoint copies of Example~\ref{ex:4qubit} gives $n=4m$
  qubits carrying $m$ pairwise disjoint saturated triples, hence $m$
  parallel logical operations at constant depth, with
  $|\cT|=\Theta(n)$.  The construction satisfies
  Lemma~\ref{lem:packing-main} with $s_0=4$, $M=1$.  It does not
  contradict Section~\ref{sec:obstructions}: the code has $d=1$, exactly
  as Proposition~\ref{prop:constant-distance} predicts for $p_0=1$.  This
  is the general shape of the difficulty, not an artifact of a toy code.
\end{example}

\subsection{A hypergraph-product instance}
\label{sec:hgp-example}

The first example shows that admissible triples need not be stable.  Let
$H_1$ be the $3\times7$ parity-check matrix of the $[7,4,3]$ Hamming code and
$H_2$ the $(\nu-1)\times\nu$ matrix with rows $e_i+e_{i+1}$, so
$\ker H_2=\{0^\nu,1^\nu\}$.  The hypergraph
product~\cite{tillich2014quantum} is
\begin{equation}
  H_X=[\,H_1\otimes I_\nu \mid I_3\otimes H_2^\top\,],\quad
  H_Z=[\,I_7\otimes H_2 \mid H_1^\top\otimes I_{\nu-1}\,],
\end{equation}
which satisfies
\begin{align}
  H_XH_Z^\top
  &=(H_1\otimes I_\nu)(I_7\otimes H_2^\top)
   +(I_3\otimes H_2^\top)(H_1\otimes I_{\nu-1})\notag\\
  &=H_1\otimes H_2^\top+H_1\otimes H_2^\top=0 .
\end{align}  We fix
\begin{equation}
  H_1=\begin{pmatrix}1&0&1&0&1&0&1\\0&1&1&0&0&1&1\\0&0&0&1&1&1&1\end{pmatrix},
  \label{eq:hamming}
\end{equation}
the parity-check matrix of the $[7,4,3]$ Hamming code in standard form; the
particular choice fixes which $\chi_i$ below are independent modulo $C_X$.

For $\nu=3$ this gives $n=27$ with $\operatorname{rank}H_X=9$,
$\operatorname{rank}H_Z=14$ and $k=4$; the row weights are $5$ to $6$ for
$H_X$ and $3$ to $5$ for $H_Z$, with column weights at most $3$ and $4$
respectively.  We computed $d_X$ by minimizing the weight over the $2^9$
elements of $C_X$ for each of the $15$ nonzero $X$-logical classes, and
$d_Z$ analogously over the $2^{14}$ elements of $C_Z$ for each nonzero
$Z$-logical class; both equal $3$, so the code is $[[27,4,3]]$.

Index the first sector of qubits by pairs $(i,c)\in[7]\times[\nu]$ and let
$\chi_i$ be the indicator of $\{i\}\times[\nu]$, a weight-$\nu$ vector
supported on one row of that sector.  Since $H_2\mathbf{1}^\nu=0$, each
$\chi_i$ lies in $\ker H_Z=C_Z^\perp$.  A rank computation with $H_1$ as in \eqref{eq:hamming} shows that
$\chi_1,\dots,\chi_4$ are linearly independent modulo $C_X$ and therefore
represent a basis of $\mathcal{L}_X$; for $\nu=3$ these are four weight-$3$
representatives on the disjoint blocks
$\{1,2,3\},\{4,5,6\},\{7,8,9\},\{10,11,12\}$.  This basis yields
representatives of much lower weight than the more familiar description of
the $X$-logicals through $\ker H_1$.

Form one canonical representative of each nonzero logical class by summing
the selected $\chi_i$.  Among the $\binom{15}{3}=455$ unordered triples of
these canonical representatives, $420$ have classes independent in
$\mathcal{L}_X$ and so are candidates.  Exactly four of the $420$ are
$\CCZ$-admissible, one for each way of omitting a single basis vector, and
each is saturated with $s_{xy}=s_{xz}=s_{yz}=s_{xyz}=3$, so $p_0=3$ and
$|S_t|=12$.  Lemma~\ref{lem:local-realization} synthesizes $111$ gates,
namely $81$ $\CCZ$, $27$ $\CZ$ and $3$ $Z$, with maximum qubit incidence
$37=1+3(3)+3(3)^2$; exhaustive evaluation over all $2^{12}$ basis states
of $S_t$ confirms \eqref{eq:wirewise-phase}.

Every part of the stability criterion fails for all four triples.  For
(S1), none of $x\odot y$, $x\odot z$, $y\odot z$ lies in $C_X^\perp$,
verified by computing $H_X(x\odot y)\ne0$ directly.  For (S2) and (S3), let $H_D$ be a basis matrix of the Schur span $D$,
which has $\dim D=24$ against $\dim C_X=9$.  The script verifies
$H_D\,v\ne0$ for each of $v\in\{x,y,z\}$, so (S2) fails, and $H_D\,g\ne0$
for every one of the nine generators $g$ of $C_X$, so (S3) fails and $C_X$
is not totally triorthogonal.  So a saturated triple on a finite qLDPC code can fail all three stability
conditions at once, which is why Proposition~\ref{prop:stability} must be
checked rather than assumed; the instance is consistent with
Proposition~\ref{prop:constant-distance}, since $d=3\le3p_0=9$.

For the four triples $Q=\ell_x\ell_y\ell_z$ is non-constant on the cosets of
$C_X$, so by Lemma~\ref{lem:codespace} none of the four circuits preserves
the code space and (H3) fails for each: the first triple has some $u\in C_X$
with $Q(u)=1\neq0=Q(0)$ and fails on all $16$ cosets, the other three on $8$
of $16$; the script reports both.  On this code the parity conditions are met
and the diagonal circuit they generate still does not act on the code
space.

Table~\ref{tab:hgp} reports the results of an exhaustive search over triples of chosen canonical representatives of $X$-logical classes, constructed via the hypergraph product of $\text{Hamming}(7,4)$ with the $(\nu-1)\times\nu$ path matrix for $\nu\in\{3,4,5\}$. Admissibility is representative-dependent, and the distance $d$ is exact for $\nu=3$ and an upper bound derived from the lowest-weight representative found for $\nu=4,5$. For $\nu=4$, none of the $420$ canonical triples is admissible: every region is a union of complete $\nu$-qubit blocks, so $|P_{xyz}|$ is a multiple of $\nu$, even when $\nu$ is, and $\tau=|P_{xyz}|\bmod2=0$. Admissibility of the canonical blockwise construction is thus sensitive to that parity, for these representatives.

\begin{table}[t]
  \centering
        \caption{Exhaustive search results for $\nu \in \{3,4,5\}$.}
  \label{tab:hgp}
  \small
    \begin{tabular}{@{}lcccccc@{}}
    \toprule
    $\nu$ & $\ker H_2$ & $[[n,k]]$ & $d$ & indep.\ triples & admissible & $p_0$ \\
    \midrule
    $3$ & odd  & $[[27,4]]$ & $3$      & 420 & 4 & 3 \\
    $4$ & even & $[[37,4]]$ & $\le4$   & 420 & 0 & --- \\
    $5$ & odd  & $[[47,4]]$ & $\le5$   & 420 & 4 & 5 \\
    \bottomrule
  \end{tabular}
\end{table}

\subsection{A code satisfying all the conditions}
\label{sec:positive}

The hypergraph-product instances fail stability, so it is natural to ask
whether the conditions are satisfiable at all.  They are.  On $n=12$ qubits,
index four disjoint triples of qubits as the regions
$P_{xy}=\{1,2,3\}$, $P_{xz}=\{4,5,6\}$, $P_{yz}=\{7,8,9\}$,
$P_{xyz}=\{10,11,12\}$, and take
\begin{align}
  x&=\mathbf 1_{P_{xy}\cup P_{xz}\cup P_{xyz}}, &
  y&=\mathbf 1_{P_{xy}\cup P_{yz}\cup P_{xyz}}, \notag\\
  z&=\mathbf 1_{P_{xz}\cup P_{yz}\cup P_{xyz}},
\end{align}
each of weight $9$, so the triple is saturated with $p=3$ and
$\tau(x,y,z)=|P_{xyz}|\bmod2=1$.  Let $C_X=\langle g\rangle$ with
\begin{equation}
  g=\mathbf 1_{\{1,2,4,5,7,8,10,11\}},
\end{equation}
a weight-$8$ vector meeting each region in exactly two qubits, and let $C_Z$
be the largest space orthogonal to $g,x,y,z$.  This is a genuine
$X$-stabilizer, not the trivial one: $C_X\neq\{0\}$.

Every condition holds, and the script verifies each.  The three pairwise
inner products vanish and $\tau=1$, so the triple is admissible.  For
stability, $g$ meets each region evenly, so $x\odot y,x\odot z,y\odot z$ are
orthogonal to $g$, giving (S1); $D=\langle g\rangle$ since $g\odot g=g$, and
$x,y,z$ are orthogonal to $g$, giving (S2); and $\wt(g)=8$ is even with
$\tau(g,g,g)=\wt(g)\bmod2=0$, giving (S3), so $C_X$ is totally
triorthogonal.  By the observation before Lemma~\ref{lem:codespace}, (S2)
already forces the synthesized circuit to preserve the code space, and the
coset-phase test confirms it does.  The synthesis produces $111$ gates with
maximum incidence $37=1+3(3)+3(3)^2$, matching
Corollary~\ref{cor:exact}, and reproduces \eqref{eq:wirewise-phase} on all
$2^{12}$ basis states.

The code parameters are $[[12,3,1]]$.  The distance is not an artifact of
this particular choice: it is forced, and the mechanism is the same
region-parity constraint that makes stability hold.

\begin{proposition}[Stability bounds the distance]
  \label{prop:pos-bound}
  Let $(x,y,z)$ be a saturated, stably $\CCZ$-admissible triple of a CSS
  code, with shared regions of sizes $a,b,c,e$ and $p=\max\{a,b,c,e\}$.
  Then every $g\in C_X$ meets each of the four regions in even size, and the
  code distance obeys $d\le 3p$.  Consequently no family of codes carrying
  such a triple with $p$ fixed can have $d\to\infty$: satisfying all the
  conditions confines the code to distance $O(p)$.
\end{proposition}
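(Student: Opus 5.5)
The plan is to derive the even-intersection property directly from (S1) and (S2), using saturation to express the Schur products and the representatives as unions of regions. Then $d\le3p$ follows from Proposition~\ref{prop:constant-distance}, or equivalently from the last step of Corollary~\ref{cor:exact}.

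First, write each relevant vector as a sum of region indicators. Let $\mathbf 1_R$ denote the indicator of a region $R$. Saturation gives
\[
x=\mathbf 1_{P_{xy}}+\mathbf 1_{P_{xz}}+\mathbf 1_{P_{xyz}},
\qquad
x\odot y=\mathbf 1_{P_{xy}}+\mathbf 1_{P_{xyz}},
\]
with cyclic analogues for $y,z$ and the other two products. Fix $g\in C_X$ and set $\alpha=|g\cap P_{xy}|$, $\beta=|g\cap P_{xz}|$, $\gamma=|g\cap P_{yz}|$, $\epsilon=|g\cap P_{xyz}|$, all taken mod $2$.

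Condition (S1) says the three Schur products lie in $C_X^\perp$. This yields three linear equations over $\Ftwo$:
\[
\alpha+\epsilon=0,\qquad \beta+\epsilon=0,\qquad \gamma+\epsilon=0 .
\]
Condition (S2) places $x,y,z$ in $D^\perp\subseteq C_X^\perp$. This yields three more:
\[
\alpha+\beta+\epsilon=0,\qquad \alpha+\gamma+\epsilon=0,\qquad \beta+\gamma+\epsilon=0 .
\]
Substituting $\alpha=\beta=\gamma=\epsilon$ from the first set into any equation of the second gives $3\epsilon=\epsilon=0$. Hence all four parities vanish, so $g$ meets every shared region in even size. Only (S1) and (S2) are needed here; (S3) is not used. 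The hypothesis $C_X\subseteq D$ enters only through $D^\perp\subseteq C_X^\perp$, which the paper has already noted.

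For the distance, $x\in C_Z^\perp$ represents a nontrivial class in $\mathcal L_X$, since the classes are independent by Definition~\ref{def:admissible}. Therefore
\[
d\le\wt(x)=a+b+e\le3p .
\]
The final consequence follows immediately: if $p$ is fixed along a family, $d$ stays bounded.

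No step is a real obstacle. The one point needing care is making sure the six parity equations are read off correctly from saturation, because a private region would add unknowns. It is also worth remarking that the even-intersection conclusion shows every stabilizer respects the block structure. This explains why the distance cannot be raised by adding stabilizers while keeping the triple stable.
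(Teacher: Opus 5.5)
Your proof is correct and follows the paper's argument essentially step for step. In both, (S1) forces the four intersection parities of $g$ to agree, and a single equation from (S2), via $D^\perp\subseteq C_X^\perp$, forces the common value to vanish; then $d\le\wt(x)=a+b+e\le 3p$ follows from the nontriviality of the class of $x$. Your closing remark about stabilizers respecting the block structure is not needed and does not explain the bound. The bound comes from $\wt(x)$ alone and already holds without stability, as in Proposition~\ref{prop:constant-distance}.
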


\begin{proof}
  Write $g_{xy}=|g\cap P_{xy}|$, and similarly for the other regions.  The
  Schur product $x\odot y$ is the indicator of $P_{xy}\cup P_{xyz}$, so
  $\inner{g}{x\odot y}=g_{xy}+g_{xyz}$; by (S1) this is $0$, and cyclically
  $g_{xz}+g_{xyz}=g_{yz}+g_{xyz}=0$, whence
  $g_{xy}\equiv g_{xz}\equiv g_{yz}\equiv g_{xyz}\pmod2$.  Saturation gives
  $\inner{x}{g}=g_{xy}+g_{xz}+g_{xyz}$, which is $0$ by (S2) since
  $x\in D^\perp\subseteq C_X^\perp$; substituting the common value $t$ gives
  $3t\equiv t\equiv0$, so all four intersections are even.  The weight bound
  $\wt(x)=a+b+e\le3p$ and nontriviality of the class of $x$ give $d\le3p$ as
  in Corollary~\ref{cor:exact}.  Fixing $p$ fixes the bound independently of
  $n$.
\end{proof}

This is the positive counterpart of Corollary~\ref{cor:exact}: the
conditions are satisfiable, but exactly in the bounded-distance regime the
corollary predicts.  A construction that keeps $\Delta$ bounded while
$d\to\infty$, the goal of a scalable native gate, must therefore violate one
of admissibility, saturation, stability, or the ancilla-free model, and the
constructions that succeed at scale do exactly that: Zhu's homological
product codes~\cite{zhu2025topological} and the addressable codes of He
\emph{et al.}~\cite{he2025quantum,he2025asymptotically} realize the logical
$\CCZ$ across three code blocks as a phase correct on the code space rather
than as an all-inputs identity on a single register, which is the
model-weakening escape named after Corollary~\ref{cor:exact}, not a
counterexample to Proposition~\ref{prop:pos-bound}.

\section{Discussion and Open Problems}
\label{sec:implications}

\subsection{Position relative to prior work}

Table~\ref{tab:unify} places Definition~\ref{def:admissible} next to three standard frameworks. The comparison is meant to show overlap, not equivalence. Those settings build in global structure that Definition~\ref{def:admissible} does not assume, so the statements here are correspondingly narrower. Triorthogonality constrains an entire generator matrix and reappears here, through (S3), as a stability requirement; the cup-product formulation works with cochains, where stability is intrinsic; and addressable orthogonality controls selected and spectator logical coordinates, which is the role played by (H3). Recent constructions provide the kind of global algebraic structure needed for logical actions of the type abstracted in (H3), and some also realize parallel disjoint logical gate patterns~\cite{guemard2025good,li2026transversal,nguyen2025good}, though the logical patterns differ from one construction to another. The contribution here is more focused: an exact stability criterion and its triorthogonality consequence, the seven-term saturated synthesis with uniqueness and the resulting bound $L\ge\Delta\ge3p\ge d$, two counting obstructions on candidate collections, and a negative computational example on a concrete code.

\begin{table}[t]
\centering
\caption{The parity condition inside established frameworks.}
\label{tab:unify}
\small
\begin{tabular}{@{}p{1.85cm}p{2.5cm}p{2.4cm}@{}}
\toprule
Framework & Overlap condition & Additional structure \\
\midrule
Triorthogonal \cite{bravyi2012magic} & even pairwise and triple overlaps among distinct rows & global constraints on all rows; odd/even logical--stabilizer split \\
Cup product \cite{bombin2006topological,zhu2025topological} & $[x]\cup[y]\cup[z]\ne0$ & cochain complex; stability intrinsic \\
Addressable orth.\ \cite{he2025quantum} & pairwise orth., $\tau=1$ on chosen logicals & control of spectator coordinates \\
This work & pairwise orth., $\tau=1$, fixed reps & none; stability characterized, (H3) assumed \\
\bottomrule
\end{tabular}
\end{table}

\subsection{From resource states to a production claim}

Parallel logical $\CCZ$ gates are not yet a resource-state construction. A production claim also needs encoded input, extraction and injection procedures, a noise model, and disjointness of the addressed logical subsystems, which is exactly (H4) rather than a consequence of disjoint physical supports. We therefore state a parallel gate implementation and leave the resource-state version open.

\subsection{Open problems}

The questions below all ask how far the bounded-congestion picture can be pushed, and what has to change once one leaves the strict ancilla-free setting.

\noindent\emph{1. Congestion at scale.}
Find an asymptotically good qLDPC family with $\omega(1)$ stably $\CCZ$-admissible triples whose diagonal realizations satisfy (H2) with constant $\Delta$. Proposition~\ref{prop:constant-distance} rules out the bounded-shared-region saturated route. By Corollary~\ref{cor:exact}, no saturated admissible triple can be realized at constant depth inside the ancilla-free $Z$-$\CZ$-$\CCZ$ model once the distance grows, regardless of region size. That leaves two directions: inside (H1)--(H4), one needs \emph{unsaturated} triples whose exact realizations still have bounded congestion; outside the present model, one may allow ancillas, non-diagonal or larger gates, or equality only on the code space. In that case Proposition~\ref{prop:scheduling} should be extended to the new circuit class.

\noindent\emph{2. Bounded congestion beyond $\mathcal{V}$.}
The cubic phase itself is not the hard part, and Lemma~\ref{lem:anf} shows that no ancilla-free $Z$-$\CZ$-$\CCZ$ circuit can improve on the monomial count. Progress therefore has to come from ancillas with non-diagonal gates, a larger gate set, or a phase that matches the target only on the code space. Note that $\Delta$, as defined here, counts edges of a rank-at-most-three hypergraph on data qubits and does not directly apply once ancilla vertices and higher-rank gates are allowed. The right question in that setting is whether the target phase admits a constant-depth bounded-arity circuit in which every data and ancilla qubit participates in $O(1)$ gates, or whether a depth or participation lower bound can be proved. Extending Proposition~\ref{prop:scheduling} to that circuit class is part of the problem.

\noindent\emph{3. Certifying (H3).}
Give coding-theoretic conditions, checkable from $H_X$ and $H_Z$, that guarantee code-space preservation and identify the induced logical phase as exactly the target $\CCZ$ monomial $a_i a_j a_\ell$, with no spectator dependence, for the circuit of Lemma~\ref{lem:local-realization}. Lemma~\ref{lem:codespace} settles code-space preservation for a fixed diagonal phase. What remains is to identify the induced Boolean phase function on $C_Z^\perp/C_X$ in a chosen logical basis and show that it is exactly the target $\CCZ$ monomial $a_i a_j a_\ell$, with no dependence on spectator logical coordinates. Stability (Proposition~\ref{prop:stability}) is necessary for the representative-level description to descend unchanged to classes, but it is not enough by itself, and the addressable-orthogonality conditions~\cite{he2025quantum} indicate the missing ingredients.

\noindent\emph{4. Stability-compatible families.}
Characterize CSS codes whose $C_X$ satisfies (S3), equivalently is totally triorthogonal in the sense of Proposition~\ref{prop:stability}, and that admit stably admissible triples. The hypergraph-product instances of Section~\ref{sec:hgp-example} fail (S1)--(S3) even though admissible representative triples exist, so the two properties do not come together automatically.

\noindent\emph{5. Higher arity.}
Extend the region decomposition and the seven-term identity to $r$-tuples and degree-$r$ multilinear overlap forms, where the analogue of \eqref{eq:seven-terms} governs $C^{r-1}Z$ phases.

\section*{Software Availability}
\label{app:repro}

The accompanying script~\cite{rowshan2026native} implements GF(2) row reduction, rank and nullspace;
the hypergraph-product construction with $H_XH_Z^\top=0$ asserted; a basis of
$C_Z^\perp/C_X$; enumeration of the $\binom{2^k-1}{3}$ class triples with a
rank filter; admissibility and the separate tests (S1)--(S3); the Schur span
$D$; exact minimum distance over both stabilizer sectors; the coset-phase
test of Lemma~\ref{lem:codespace}; and the incidence formulas
\eqref{eq:incidences} against the algebraic normal form.  Every
counterexample and count stated in the paper is asserted, and each quantity
in Section~\ref{sec:hgp-example} and Table~\ref{tab:hgp} is printed.
For efficiency $D$ is built from Schur products of a \emph{basis} of $C_X$
rather than all its elements, valid by bilinearity of $\odot$; and
Proposition~\ref{prop:stability}, proved analytically, is additionally
checked against direct enumeration over $C_X^3$ on $1000$ small instances.

\appendices

\section{The Seven-Term Identity}
\label{app:seven-term}

Expanding $(a+b+e)(a+c'+e)(b+c'+e)$ over $\Ftwo$ with $a,b,c',e$ idempotent
and reducing squares, the cross terms $ab$, $ac'$ and $abc'$ each appear an
even number of times and cancel, leaving Eq.~\eqref{eq:seven-terms}.  The
script checks the identity on all sixteen assignments of $(a,b,c',e)$.

\section{Light-Cone Bound}
\label{app:distance}

The cone must follow the circuit in order: an unrestricted closure under all
gates ignores time and can be far larger than $q^L|R|$, as a depth-two chain
of two-qubit gates already shows.

\begin{lemma}[Forward light cone]
  \label{lem:light-cone}
    Let the subcircuit remaining after the fault have layers
  $\mathcal{G}_1,\dots,\mathcal{G}_{L'}$, each gate acting on at most $q$
  qubits; the circuit need not be Clifford.  For $R\subseteq[n]$ put $R_0=R$
  and
  \begin{equation}
        R_j=R_{j-1}\cup\bigcup_{\substack{g\in\mathcal{G}_j\\
      \supp(g)\cap R_{j-1}\neq\emptyset}}\supp(g),
    \qquad j=1,\dots,L' ,
  \end{equation}
    and call $R_{L'}$ the \emph{forward light cone} of $R$.  Then
  $|R_j|\le q|R_{j-1}|$, so $|R_{L'}|\le q^{L'}|R|$.  Consequently a fault of
  weight $w$ inserted before layer $j$ of a depth-$L$ circuit propagates,
  through the $L'=L-j+1$ remaining layers, to an operator supported within a
  set of size at most $q^{L'}w\le q^Lw$.
\end{lemma}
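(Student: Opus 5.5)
The plan is a direct induction on $j$ for the set bound, followed by a Heisenberg-picture argument to show that the propagated fault operator really lives inside $R_{L'}$.

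\emph{Counting step.} Fix $j$ and look at the gates of $\mathcal{G}_j$ whose supports meet $R_{j-1}$. The gates within one layer have pairwise disjoint supports, so each qubit of $R_{j-1}$ lies in at most one such gate. Hence at most $|R_{j-1}|$ gates are selected. Each selected gate contributes at most $q$ qubits, and its support already contains at least one qubit of $R_{j-1}$. So
\[
|R_j|\le |R_{j-1}\setminus \textstyle\bigcup\supp(g)|+\sum_g|\supp(g)|\le q|R_{j-1}|,
\]
where the last inequality charges each selected gate's $q$ qubits to a distinct qubit of $R_{j-1}$ that it covers. Uncovered qubits of $R_{j-1}$ are charged only to themselves. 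Induction then gives $|R_{L'}|\le q^{L'}|R|$.

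\emph{Support step.} Write the remaining circuit as $W=W_{L'}\cdots W_1$, with $W_j=\prod_{g\in\mathcal{G}_j}g$, and let $E$ be supported on $R$. I will show by induction that $E_j:=W_j\cdots W_1\,E\,W_1^\dagger\cdots W_j^\dagger$ is supported on $R_j$. Suppose $E_{j-1}$ acts as the identity outside $R_{j-1}$. Factor $W_j=A_jB_j$, where $A_j$ is the product of the gates meeting $R_{j-1}$ and $B_j$ is the product of the remaining gates. The gates of $B_j$ act on qubits outside $R_{j-1}$, where $E_{j-1}$ is trivial, so they commute with $E_{j-1}$ and cancel against their inverses. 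What remains is $A_jE_{j-1}A_j^\dagger$, which is supported on $R_{j-1}\cup\bigcup\supp(g)=R_j$. The argument never uses any Clifford property: conjugating an operator by a unitary leaves it unchanged outside that unitary's support, and this holds for arbitrary unitaries. For the final claim, a fault inserted before layer $j$ passes through $L'=L-j+1\le L$ layers, and $q\ge1$ gives $q^{L'}w\le q^Lw$.

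\emph{Main obstacle.} The delicate point is the support step, and in particular the justification that the gates of $B_j$ can be discarded. This needs two facts: that the gates within a layer act on disjoint qubits, so $W_j$ factors cleanly as $A_jB_j$, and that an operator supported on $T$ commutes with any unitary supported on the complement of $T$. Both facts are standard, and with them in hand the rest is routine bookkeeping.
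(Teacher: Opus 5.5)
Your proof is correct, and your counting step is the paper's own argument: supports within a layer are disjoint, so each qubit of $R_{j-1}$ lies in at most one selected gate, each such gate adds at most $q$ qubits, and induction gives $|R_j|\le q|R_{j-1}|$ and $|R_{L'}|\le q^{L'}|R|$. Your Heisenberg-picture step (the gates $B_j$ avoiding $R_{j-1}$ commute with $E_{j-1}$ and drop out, leaving $A_jE_{j-1}A_j^\dagger$ on $R_j$) is a worthwhile addition: the paper only remarks that the recursion records which qubits each gate connects, and never explicitly shows that the propagated operator $WEW^\dagger$ is supported in $R_{L'}$, which is what the ``consequently'' clause and Remark~\ref{rem:distance} rely on.
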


\begin{proof}
  The gates of a single layer act on disjoint qubit sets, so each
  $i\in R_{j-1}$ lies in at most one $g\in\mathcal{G}_j$ and contributes at
  most $q$ elements (itself and the at most $q-1$ other legs) to $R_j$;
    hence $|R_j|\le q|R_{j-1}|$ and $|R_{L'}|\le q^{L'}|R|$ by induction.  The
  recursion records only which qubits a gate connects, so the bound is
  independent of what the gates do.
\end{proof}

Remark~\ref{rem:distance} records what this does and does not give.
Bounds of this type underlie the classification of locality-preserving
logical gates~\cite{bravyi2013classification,pastawski2015fault}.

\bibliographystyle{IEEEtran}
\bibliography{refs1}

@inproceedings{aharonov1997fault,
  author    = {Aharonov, Dorit and Ben-Or, Michael},
  title     = {Fault-tolerant quantum computation with constant error},
  booktitle = {Proc. 29th ACM Symp. Theory of Computing (STOC)},
  pages     = {176--188},
  year      = {1997}
}

@article{aliferis2006quantum,
  author  = {Aliferis, P. and Gottesman, D. and Preskill, J.},
  title   = {Quantum accuracy threshold for concatenated distance-3 codes},
  journal = {Quantum Inf. Comput.},
  volume  = {6},
  number  = {2},
  pages   = {97--165},
  year    = {2006}
}

@article{kitaev2003fault,
  author  = {Kitaev, A. Yu.},
  title   = {Fault-tolerant quantum computation by anyons},
  journal = {Ann. Phys.},
  volume  = {303},
  number  = {1},
  pages   = {2--30},
  year    = {2003}
}

@article{bombin2006topological,
  author  = {Bomb{\'i}n, H. and Martin-Delgado, M. A.},
  title   = {Topological quantum distillation},
  journal = {Phys. Rev. Lett.},
  volume  = {97},
  pages   = {180501},
  year    = {2006}
}

@article{kubica2015universal,
  author  = {Kubica, A. and Beverland, M. E.},
  title   = {Universal transversal gates with color codes: A simplified approach},
  journal = {Phys. Rev. A},
  volume  = {91},
  pages   = {032330},
  year    = {2015}
}

@article{vasmer2019three,
  author  = {Vasmer, M. and Browne, D. E.},
  title   = {Three-dimensional surface codes: Transversal gates and fault-tolerant architectures},
  journal = {Phys. Rev. A},
  volume  = {100},
  pages   = {012312},
  year    = {2019}
}

@article{eastin2009restrictions,
  author  = {Eastin, B. and Knill, E.},
  title   = {Restrictions on transversal encoded quantum gate sets},
  journal = {Phys. Rev. Lett.},
  volume  = {102},
  pages   = {110502},
  year    = {2009}
}

@article{bravyi2005universal,
  author  = {Bravyi, S. and Kitaev, A.},
  title   = {Universal quantum computation with ideal {C}lifford gates and noisy ancillas},
  journal = {Phys. Rev. A},
  volume  = {71},
  pages   = {022316},
  year    = {2005}
}

@article{bravyi2012magic,
  author  = {Bravyi, S. and Haah, J.},
  title   = {Magic-state distillation with low overhead},
  journal = {Phys. Rev. A},
  volume  = {86},
  pages   = {052329},
  year    = {2012}
}

@article{campbell2017quantum,
  author  = {Campbell, E. T. and O'Gorman, J.},
  title   = {Quantum computation with realistic magic-state factories},
  journal = {Phys. Rev. A},
  volume  = {95},
  pages   = {032338},
  year    = {2017}
}

@article{gidney2019efficient,
  author  = {Gidney, C. and Fowler, A. G.},
  title   = {Efficient magic state factories with a catalyzed {CCZ} to {2T} transformation},
  journal = {Quantum},
  volume  = {3},
  pages   = {135},
  year    = {2019}
}

@article{litinski2019game,
  author  = {Litinski, D.},
  title   = {A game of surface codes: Large-scale quantum computing with lattice surgery},
  journal = {Quantum},
  volume  = {3},
  pages   = {128},
  year    = {2019}
}

@article{breuckmann2021quantum,
  author  = {Breuckmann, N. P. and Eberhardt, J. N.},
  title   = {Quantum low-density parity-check codes},
  journal = {PRX Quantum},
  volume  = {2},
  pages   = {040101},
  year    = {2021}
}

@article{tillich2014quantum,
  author  = {Tillich, J.-P. and Z{\'e}mor, G.},
  title   = {Quantum {LDPC} codes with positive rate and minimum distance proportional to the square root of the blocklength},
  journal = {IEEE Trans. Inf. Theory},
  volume  = {60},
  number  = {2},
  pages   = {1193--1202},
  year    = {2014}
}

@inproceedings{panteleev2021asymptotically,
  author    = {Panteleev, Pavel and Kalachev, Gleb},
  title     = {Asymptotically good quantum and locally testable classical {LDPC} codes},
  booktitle = {Proc. 54th Annual ACM SIGACT Symp. Theory of Computing (STOC)},
  pages     = {375--388},
  year      = {2022},
  doi       = {10.1145/3519935.3520017},
  eprint    = {2111.03654},
  archivePrefix = {arXiv},
  primaryClass  = {cs.IT}
}

@inproceedings{leverrier2022quantum,
  author    = {Leverrier, A. and Z{\'e}mor, G.},
  title     = {Quantum {T}anner codes},
  booktitle = {Proc. 63rd IEEE Symp. Foundations of Computer Science (FOCS)},
  pages     = {872--883},
  year      = {2022}
}

@article{bravyi2010tradeoffs,
  author  = {Bravyi, S. and Poulin, D. and Terhal, B.},
  title   = {Tradeoffs for reliable quantum information storage in {2D} systems},
  journal = {Phys. Rev. Lett.},
  volume  = {104},
  pages   = {050503},
  year    = {2010}
}

@article{baspin2022quantifying,
  author  = {Baspin, N. and Krishna, A.},
  title   = {Quantifying nonlocality: How outperforming local quantum codes is expensive},
  journal = {Phys. Rev. Lett.},
  volume  = {129},
  pages   = {050505},
  year    = {2022}
}

@article{tremblay2022constant,
  author  = {Tremblay, M. A. and Delfosse, N. and Beverland, M. E.},
  title   = {Constant-overhead quantum error correction with thin planar connectivity},
  journal = {Phys. Rev. Lett.},
  volume  = {129},
  pages   = {050504},
  year    = {2022}
}

@article{bravyi2024high,
  author  = {Bravyi, S. and Cross, A. W. and Gambetta, J. M. and Maslov, D. and Rall, P. and Yoder, T. J.},
  title   = {High-threshold and low-overhead fault-tolerant quantum memory},
  journal = {Nature},
  volume  = {627},
  pages   = {778--782},
  year    = {2024}
}

@article{pecorari2025highrate,
  author  = {Pecorari, L. and Jandura, S. and Brennen, G. K. and Pupillo, G.},
  title   = {High-rate quantum {LDPC} codes for long-range-connected neutral atom registers},
  journal = {Nat. Commun.},
  volume  = {16},
  pages   = {1111},
  year    = {2025}
}

@article{xu2025fast,
  author  = {Xu, Qian and Zhou, Hengyun and Zheng, Guo and Bluvstein, Dolev
             and Bonilla Ataides, J. Pablo and Lukin, Mikhail D. and Jiang, Liang},
  title   = {Fast and parallelizable logical computation with homological product codes},
  journal = {Phys. Rev. X},
  volume  = {15},
  pages   = {021065},
  year    = {2025},
  doi     = {10.1103/PhysRevX.15.021065}
}

@inproceedings{golowich2024asymptotically,
  author    = {Golowich, Louis and Guruswami, Venkatesan},
  title     = {Asymptotically good quantum codes with transversal non-{C}lifford gates},
  booktitle = {Proc. 57th Annual ACM Symp. Theory of Computing (STOC)},
  pages     = {707--717},
  year      = {2025},
  eprint    = {2408.09254},
  archivePrefix = {arXiv},
  primaryClass  = {quant-ph}
}

@inproceedings{golowich2024quantum,
  author    = {Golowich, Louis and Lin, Ting-Chun},
  title     = {Quantum {LDPC} codes with transversal non-{C}lifford gates via
               products of algebraic codes},
  booktitle = {Proc. 57th Annual ACM Symp. Theory of Computing (STOC)},
  pages     = {689--696},
  year      = {2025},
  eprint    = {2410.14662},
  archivePrefix = {arXiv},
  primaryClass  = {quant-ph}
}

@misc{lin2024transversal,
  author       = {Lin, Ting-Chun},
  title        = {Transversal non-{C}lifford gates for quantum {LDPC} codes on sheaves},
  year         = {2024},
  eprint       = {2410.14631},
  archivePrefix = {arXiv},
  primaryClass = {quant-ph},
  note         = {arXiv:2410.14631}
}

@misc{he2025quantum,
  author       = {He, Zhiyang and Vaikuntanathan, Vinod and Wills, Adam and Zhang, Rachel Yun},
  title        = {Quantum codes with addressable and transversal non-{C}lifford gates},
  year         = {2025},
  eprint       = {2502.01864},
  archivePrefix = {arXiv},
  primaryClass = {quant-ph},
  note         = {arXiv:2502.01864}
}

@misc{he2025asymptotically,
  author       = {He, Zhiyang and Vaikuntanathan, Vinod and Wills, Adam and Zhang, Rachel Yun},
  title        = {Asymptotically good quantum codes with addressable and transversal non-{C}lifford gates},
  year         = {2025},
  eprint       = {2507.05392},
  archivePrefix = {arXiv},
  primaryClass = {quant-ph},
  note         = {arXiv:2507.05392}
}

@misc{zhu2025topological,
  author       = {Zhu, Guanyu},
  title        = {A topological theory for {qLDPC}: Non-{C}lifford gates and magic state
                  fountain on homological product codes with constant rate and beyond the
                  $N^{1/3}$ distance barrier},
  year         = {2025},
  eprint       = {2501.19375},
  archivePrefix = {arXiv},
  primaryClass = {quant-ph},
  note         = {arXiv:2501.19375}
}

@misc{guemard2025good,
  author       = {Gu{\'e}mard, Virgile},
  title        = {Good quantum codes with addressable and parallelizable transversal
                  non-{C}lifford gates},
  year         = {2025},
    eprint       = {2510.19809},
  archivePrefix = {arXiv},
  primaryClass = {quant-ph},
  note         = {arXiv:2510.19809v2}
}

@misc{li2026transversal,
  author       = {Li, Yiming and Li, Zimu and Liu, Zi-Wen},
  title        = {Transversal non-{C}lifford gates on almost-good quantum {LDPC} and
                  quantum locally testable codes},
  year         = {2026},
  eprint       = {2604.01874},
  archivePrefix = {arXiv},
  primaryClass = {quant-ph},
  note         = {arXiv:2604.01874}
}

@inproceedings{nguyen2025good,
  author    = {Nguyen, Quynh T.},
  title     = {Good binary quantum codes with transversal {CCZ} gate},
  booktitle = {Proc. 57th Annual ACM Symp. Theory of Computing (STOC)},
  pages     = {697--706},
  year      = {2025},
  eprint    = {2408.10140},
  archivePrefix = {arXiv},
  primaryClass  = {quant-ph}
}

@misc{fu2025nogo,
  author       = {Fu, Esther Xiaozhen and Zheng, Han and Li, Zimu and Liu, Zi-Wen},
  title        = {No-go theorems for logical gates on product quantum codes},
  year         = {2025},
  eprint       = {2507.16797},
  archivePrefix = {arXiv},
  primaryClass = {quant-ph},
  note         = {arXiv:2507.16797}
}

@article{bravyi2013classification,
  author  = {Bravyi, S. and K{\"o}nig, R.},
  title   = {Classification of topologically protected gates for local stabilizer codes},
  journal = {Phys. Rev. Lett.},
  volume  = {110},
  pages   = {170503},
  year    = {2013}
}

@article{pastawski2015fault,
  author  = {Pastawski, F. and Yoshida, B.},
  title   = {Fault-tolerant logical gates in quantum error-correcting codes},
  journal = {Phys. Rev. A},
  volume  = {91},
  pages   = {012305},
  year    = {2015}
}

@article{pippenger1989asymptotic,
  author  = {Pippenger, N. and Spencer, J.},
  title   = {Asymptotic behavior of the chromatic index for hypergraphs},
  journal = {J. Combin. Theory Ser. A},
  volume  = {51},
  number  = {1},
  pages   = {24--42},
  year    = {1989}
}

@misc{rowshan2026native,
  author       = {Mohammad Rowshan},
  title        = {Native Non-Clifford Gates in qLDPC Codes},
  year         = {2026},
  publisher    = {GitHub},
  journal      = {GitHub repository},
  howpublished = {\url{https://github.com/mohammad-rowshan/Native-Non-Clifford-Gates-in-qLDPC-Codes}}
}

\end{document}